\numberwithin{equation}{section}
\newtheorem{Theorem}{Theorem}[section]
\newtheorem*{Theorem*}{Theorem}
\newtheorem{Corollary}[Theorem]{Corollary}
\newtheorem{Lemma}[Theorem]{Lemma}
\newtheorem{Proposition}[Theorem]{Proposition}
 { \theoremstyle{definition}
\newtheorem{Definition}[Theorem]{Definition}

\newtheorem{Remark}[Theorem]{Remark} }
\begin{document}
\allowdisplaybreaks

\renewcommand{\thefootnote}{}

\newcommand{\arXivNumber}{2306.16323}

\renewcommand{\PaperNumber}{100}

\FirstPageHeading

\ShortArticleName{Jacobi Beta Ensemble and $b$-Hurwitz Numbers}

\ArticleName{Jacobi Beta Ensemble and $\boldsymbol{b}$-Hurwitz Numbers\footnote{This paper is a~contribution to the Special Issue on Evolution Equations, Exactly Solvable Models and Random Matrices in honor of Alexander Its' 70th birthday. The~full collection is available at \href{https://www.emis.de/journals/SIGMA/Its.html}{https://www.emis.de/journals/SIGMA/Its.html}}}

\Author{Giulio RUZZA~$^{\rm ab}$}

\AuthorNameForHeading{G.~Ruzza}

\Address{$^{\rm a)}$~Grupo de F\'{i}sica Matem\'{a}tica, Campo Grande Edif\'{i}cio C6, 1749-016, Lisboa, Portugal}
\Address{$^{\rm b)}$~Departamento de Matem\'atica, Faculdade de Ci\^encias da Universidade de Lisboa,\\
\hphantom{$^{\rm b)}$}~Campo Grande Edif\'{i}cio C6, 1749-016, Lisboa, Portugal}
\EmailD{\href{mailto:gruzza@fc.ul.pt}{gruzza@fc.ul.pt}}

\ArticleDates{Received July 03, 2023, in final form November 29, 2023; Published online December 19, 2023}

\Abstract{We express correlators of the Jacobi $\beta$ ensemble in terms of (a special case of) $b$-Hurwitz numbers, a deformation of Hurwitz numbers recently introduced by Chapuy and Do\l\c{e}ga. The proof relies on Kadell's generalization of the Selberg integral. The Laguerre limit is also considered. All the relevant $b$-Hurwitz numbers are interpreted (following Bonzom, Chapuy, and Do\l\c{e}ga) in terms of colored monotone Hurwitz maps.}

\Keywords{beta ensembles; Jack polynomials; Hurwitz numbers; combinatorial maps}

\Classification{15B52; 05E05; 05E16}

\renewcommand{\thefootnote}{\arabic{footnote}}
\setcounter{footnote}{0}

\section{Introduction and statement of results}

Hurwitz numbers count branched coverings of the sphere by a Riemann surface with prescribed ramification profiles.
Hurwitz himself~\cite{Hurwitz} showed that this geometric counting problem boils down, via monodromy representation, to a combinatorial one.
The latter is the problem of counting factorizations of the identity in the symmetric group with factors in prescribed conjugacy classes.
Today, Hurwitz numbers have been generalized in various directions and are the subject of renewed interest because of their connections to integrable systems~\cite{HarnadPaquet,HarnadOrlov,Okounkov} and enumerative geometry~\cite{Dijkgraaf,ELSV,OkounkovPandharipande}.

There are many matrix models connected with (various versions of) Hurwitz numbers, e.g., the Harish-Chandra--Itzykson--Zuber integral~\cite{GGPN} and the Br\'ezin--Gross--Witten model~\cite{Novak}, as well as externally coupled Br\'ezin--Hikami type models with a Meijer-G weight~\cite{BertolaHarnad}.
A matrix model for simple Hurwitz numbers was given in~\cite{BEMS}.
Moreover, it has been shown~\cite{GisonniGravaRuzza2021} that correlators (cf.~\eqref{eq:correlators} below) of a random Hermitian matrix distributed according to the Jacobi unitary ensemble are generating functions for a type of Hurwitz numbers ({\it triple monotone} Hurwitz numbers); this result extends the combinatorial interpretation of correlators for the Gaussian~\cite{IZ} and Laguerre~\cite{CundenDahlqvistOConnell,GisonniGravaRuzza2020,GLM,HanlonSS} unitary ensembles.

Recently, a deformation of Hurwitz numbers has been constructed in~\cite{ChapuyDolega} (see also~\cite{GouldenJackson}), termed $b$-Hurwitz number, which count (non-orientable) generalized branched coverings of the sphere.
Generating functions of (non-deformed, $b=0$) Hurwitz numbers~\cite{HarnadPaquet,HarnadOrlov,Okounkov} admit explicit expansion in terms of Schur functions (from which it can be shown that they are tau functions of integrable systems), whereas (roughly speaking) for $b$-Hurwitz numbers one replaces Schur functions with Jack symmetric functions.

In a certain sense, this deformation mimics the deformation of unitary-invariant ensembles of random matrices to $\beta$ ensembles, where we always consider the following relation of parameters:
$b = \frac 2 \beta -1$.
For instance, in~\cite{BonzomChapuyDolega} it is shown that an orthogonal (i.e., $\beta=1$) version of the Br\'ezin--Gross--Witten matrix integral is a generating function for monotone $b$-Hurwitz numbers with $b=1$.
The $\beta=2$ version of this fact, relating the Br\'ezin--Gross--Witten integral proper (which is an integral over unitary matrices) to monotone ($b=0$) Hurwitz numbers, is due to Novak~\cite{Novak} (see also~\cite[Section~6.1]{BonzomChapuyDolega}).

Our first aim is to prove another result in this direction: correlators of the Jacobi $\beta$ ensemble are generating functions for (a type) of $b$-Hurwitz numbers.
This recovers the aforementioned result of~\cite{GisonniGravaRuzza2021} when $\beta=2$.
Moreover, it implies analogous results for the Laguerre $\beta$ ensemble, by taking a suitable limit.

Our second aim is to investigate the combinatorial interpretation of a general class of $b$-Hurwitz numbers, namely, the class of multiparametric $b$-Hurwitz numbers with rational weight generating function.
This class covers all the relevant cases for the aforementioned $\beta$~ensembles.
In particular, we will show that all these $b$-Hurwitz numbers count {\it $b$-monotone Hurwitz maps} (introduced in~\cite{BonzomChapuyDolega}) equipped with a special {\it coloring}.
We emphasize the parallel with ordinary (i.e.,~$b=0$) multiparametric Hurwitz numbers with rational weight generating function, which count monotone factorizations into transpositions in the symmetric group equipped with a similar coloring (cf.~Section~\ref{sec:orientable}).

We now proceed to a detailed formulation of the results.

\subsection[Jacobi beta ensemble]{Jacobi $\boldsymbol{\beta}$ ensemble}

The {\it Jacobi $\beta$ ensemble} (of size $n$) is the random point process on the unit interval $(0,1)$ with (almost surely) $n$ particles, the location of which is governed by the joint probability distribution $w^{\sf J}_\beta(\underline x;c,d)\, \mathrm{d} x_1\cdots\mathrm{d} x_n$ given by
\begin{gather*}
w^{\sf J}_\beta(\underline x;c,d) = \frac 1{\mathcal Z_\beta^{\sf J}}\prod_{1\leq i\leq n} \bigl(x_i^{\frac\beta 2 c-1}(1-x_i)^{\frac\beta 2 d-1}\bigr)\!\prod_{1\leq i<j\leq n}\!|x_i-x_j|^\beta ,
\end{gather*}
where $\underline x=(x_1,\dots,x_n)\in (0,1)^n$, $\beta,c,d>0$ and the normalization is explicitly given as
\begin{equation}
\label{eq:Z}
\mathcal Z^{\sf J}_\beta= n!\prod_{1\leq i<j\leq n}\frac{\Gamma\bigl(\tfrac\beta 2(j-i+1)\bigr)}{\Gamma\bigl(\tfrac\beta 2(j-i)\bigr)}\prod_{1\leq i\leq n}\frac{\Gamma\bigl(\tfrac\beta 2(c+n-i)\bigr)\Gamma\bigl(\tfrac\beta 2(d+n-i)\bigr)}{\Gamma\bigl(\tfrac\beta 2(c+d+2n-i-1)\bigr)} ,
\end{equation}
cf.\ Theorem~\ref{thm:ASintegral} below.
It arises as a natural deformation of the case $\beta=2$, the latter being particularly relevant as it describes the eigenvalues of an $n\times n$~random Hermitian matrix~$M$, positive-definite and bounded above by the identity, distributed according to the following unitary-invariant probability measure with Jacobi weight~\cite{Deift,Forrester}:
\begin{equation*}
\frac 1{\mathcal Z'} \det (M)^{c-1} \det(1-M)^{d-1} \, \mathrm{d} M.
\end{equation*}
Here, $\mathrm{d} M$ is the Lebesgue measure on the space of $n\times n$~Hermitian matrices, namely
\begin{equation*}
\mathrm{d} M = \prod_{1\leq i < j\leq n} \mathrm{d} X_{ij} \mathrm{d} Y_{ij} \prod_{1\leq i\leq n}\mathrm{d} X_{ii} ,
\qquad M=X+\mathrm{i} Y,
\end{equation*}
and the normalization is
\begin{equation*}
\mathcal Z' = \frac{\pi^{n(n-1)/2}}{1! 2! \cdots n!} \mathcal Z_{\beta=2}^{\sf J} .
\end{equation*}
This ensemble of random matrices is known as the Jacobi unitary ensemble; analogous models are well-known for $\beta=1$ and $4$ as well, namely the Jacobi orthogonal and symplectic ensembles, respectively~\cite{Forrester}; for general $\beta$, a model of tridiagonal random matrices is given in~\cite{EdelmanSutton} (building on earlier ideas in~\cite{KillipNenciu}).

We shall be in particular interested in the {\it correlators}, defined as the following expectation values, for integers $k_1,\dots,k_\ell$:
\begin{equation}
\label{eq:correlators}
\mathcal C_{k_1,\dots ,k_\ell}^{\sf J}(n,\beta,c,d) = \int_{(0,1)^n} \biggl(\prod_{1\leq i\leq \ell}\big(x_1^{k_i}+\dots+x_n^{k_i}\big)\biggr) w^{\sf J}_\beta(\underline x;c,d)\, \mathrm{d} x_1\cdots\mathrm{d} x_n .
\end{equation}
(In terms of matrix models, this is the expectation of a product of traces of integer powers of the random matrix.)
We will only consider the case where $k_1,\dots,k_\ell$ are all positive or all negative.

\begin{Remark}
It follows from Theorem~\ref{thm:ASintegral} and Corollary~\ref{cor:inverseSelberg} below that $\mathcal C_{k_1,\dots,k_\ell}^{\sf J}(n,\beta,c,d)$ are rational functions of $n$, $\beta$, $c$, $d$.
\end{Remark}

\begin{Remark}
The correlators~\eqref{eq:correlators} appear naturally when expanding the expectations
\begin{equation*}
\int_{(0,1)^n}\prod_{1\leq i\leq\ell}\bigg(\sum_{1\leq j\leq n}\frac{1}{\zeta_i-x_j}\bigg)w^{\sf J}_\beta(\underline x;c,d)\, \mathrm{d} x_1\cdots\mathrm{d} x_n ,\qquad \ell\geq 1 ,\quad \zeta_1,\dots\zeta_\ell\in\mathbb{C}\setminus[0,1] ,
\end{equation*}
as $\zeta_i\to 0,\infty$.
Such expectations play an important role in the study of large-$n$ asymptotics for various statistics of the Jacobi~$\beta$ ensemble via the general theory of {\it loop equations}~\cite{AmbjornMakeenko,BorotGuionnet,ChekhovEynard}, cf.~\cite{FRW} for a recent study in this direction.
It would be interesting to compare the combinatorial results of this paper with large-$n$ limit theorems for the Jacobi~$\beta$ ensemble.
\end{Remark}

\subsection[b-Hurwitz numbers]{$\boldsymbol{b}$-Hurwitz numbers}

Let $\mathcal P$ be the set of all {\it partitions}, i.e., $\lambda\in\mathcal P $ is a weakly decreasing sequence $\lambda=(\lambda_1,\lambda_2,\dots)$ of nonnegative integers which stabilizes to~$0$.
The nonzero $\lambda_i$ are called {\it parts} of~$\lambda$.
We recall the following notations, for a given $\lambda\in\mathcal P $:
\begin{alignat*}{3}
&|\lambda| = \sum_{i\geq 1}\lambda_i ,\qquad && \ell(\lambda) = |\lbrace i\geq 1\colon \lambda_{i}\not=0\rbrace| ,&
\\
&\mathsf{m}_c(\lambda) = |\lbrace i\geq 1\colon \lambda_i=c\rbrace| ,\qquad &&
\mathsf{z}_\lambda = \prod_{c\geq 1}\mathsf{m}_c(\lambda)! c^{\mathsf{m}_c(\lambda)} . &
\end{alignat*}
The {\it diagram} of $\lambda\in\mathcal P$ is the set
\begin{equation*}
\mathsf{D}(\lambda) = \big\lbrace(i,j)\in \mathbb{Z}^2\colon 1\leq i\leq\ell(\lambda),\, 1\leq j\leq \lambda_i\big\rbrace .
\end{equation*}
Elements of $\mathsf{D}(\lambda)$ are customarily denoted $\square$.
Under the involution $(i,j)\mapsto (j,i)$ of $\mathbb{Z}^2$, $\mathsf{D}(\lambda)$ is mapped into the diagram of another partition $\lambda'$, the {\it conjugate partition}.
For any ${\square=(i,j)\in\mathsf{D}(\lambda)}$, we set
\begin{equation*}
\mathsf{arm}_\lambda(\square)=\lambda_i-j ,\qquad
\mathsf{leg}_\lambda(\square)=\lambda_j'-i .
\end{equation*}
The {\it dominance relation} $\preceq_{\sf d}$ is the partial order relation on $\mathcal P $ defined by declaring $\mu\preceq_{\sf d}\lambda$ if and only if $\sum_{i=1}^r\mu_i\leq\sum_{i=1}^r\lambda_i$ for all $r\geq 1$.

Partitions provide a convenient set of labels for various bases of the ring~$\Lambda$ of symmetric functions~\cite[Chapter~I]{Macdonald}.
Concretely, we can think of $\Lambda$ as the ring $\mathbb{C}[\mathbf p]$ of polynomials in infinitely many indeterminates $\mathbf p=(p_1,p_2,p_3,\dots)$, graded by $\deg p_k=k$.
A basis of $\Lambda$ is given by $p_\lambda=p_{\lambda_1}\cdots p_{\lambda_{\ell(\lambda)}}$, for $\lambda\in\mathcal P $.

More precisely, as explained in~\cite[Chapter I]{Macdonald}, $\Lambda$ is the inverse limit as $n\to\infty$ of the inverse system formed by the rings $\Lambda_n=\mathbb{C}_n[x_1,\dots,x_n]^{\mathfrak S_n}$ of symmetric polynomials in $x_1,\dots,x_n$ and by the maps $\Lambda_m\to \Lambda_n$ for $m\geq n$ which send $x_i\mapsto 0$ for $n<i\leq m$.
Then, the variables $p_k$ are the elements of $\Lambda$ that project to the power sum symmetric polynomials $x_1^k+\dots+x_n^k$.

We also need the elements $m_\lambda\in\Lambda$ (for any $\lambda\in\mathcal P $) which project to the monomial symmetric polynomials
\begin{equation*}
\frac 1{\prod_{c\geq 1}\mathsf{m}_c(\lambda)!}\sum_{\sigma\in\mathfrak S_n}x_{\sigma(1)}^{\lambda_1}\cdots x_{\sigma(n)}^{\lambda_n} .
\end{equation*}
The $m_\lambda$ also form a basis of $\Lambda$, called the monomial basis.

Another basis of $\Lambda$ is given by the Jack functions $\mathrm{P}^{(\alpha)}_{\lambda}(\mathbf p)$ (see~\cite[Chapter~VI]{Macdonald} and~\cite{Stanley}), which also depend (rationally) on a parameter~$\alpha>0$.
They reduce to Schur functions when $\alpha=1$ and to Zonal functions when $\alpha=2$.
In general, they are uniquely determined by the following properties.
\begin{itemize}\itemsep=0pt
\item An {\it orthogonality} property:
\begin{equation*}
\bigl\langle \mathrm{P}_\lambda^{(\alpha)},\mathrm{P}_\mu^{(\alpha)}\bigr\rangle_\alpha = 0 ,\qquad \lambda,\mu\in\mathcal P ,\quad \lambda\not=\mu ,
\end{equation*}
where $\langle \,,\,\rangle_\alpha$ is the deformed Hall scalar product on~$\Lambda$, defined by
\begin{equation*}
\bigl\langle p_\lambda, p_\mu\bigr\rangle_\alpha = \delta_{\lambda\mu} \mathsf{z}_\lambda \alpha^{\ell(\lambda)} .
\end{equation*}
\item A {\it triangularity} condition with respect to the monomial basis: if we write
\begin{equation*}
\mathrm{P}_\lambda^{(\alpha)} = \sum_{\mu\in\mathcal P ,\ |\mu|=|\lambda|}v_{\lambda\mu}^{(\alpha)} m_\mu ,
\end{equation*}
we have $v_{\lambda\mu}^{(\alpha)}\not=0$ only when $\mu\preceq_{\sf d}\lambda$.
\item A {\it normalization} condition: $v_{\lambda\lambda}^{(\alpha)}=1\,$.
\end{itemize}

\begin{Remark}
We use the P-normalization, rather then the J-normalization adopted in~\cite{BonzomChapuyDolega,ChapuyDolega}.
To compare the two, one has $\mathrm J_\lambda^{(\alpha)}=\mathsf{h}_\alpha(\lambda)\mathrm{P}^{(\alpha)}_\lambda$, with the notation~\eqref{eq:hook}.
\end{Remark}

Further, we have
\begin{equation*}
\bigl\langle \mathrm{P}_\lambda^{(\alpha)},\mathrm{P}_\lambda^{(\alpha)}\bigr\rangle_\alpha = \frac {\mathsf{h}_\alpha'(\lambda)}{\mathsf{h}_\alpha(\lambda)} ,
\end{equation*}
where
\begin{gather}
\mathsf{h}_\alpha (\lambda)=\prod_{\square\in\mathsf{D}(\lambda)}(\alpha\,\mathsf{arm}_\lambda(\square)+\mathsf{leg}_\lambda(\square)+1),\nonumber\\
\mathsf{h}_\alpha'(\lambda)=\prod_{\square\in\mathsf{D}(\lambda)}(\alpha\,\mathsf{arm}_\lambda(\square)+\mathsf{leg}_\lambda(\square)+\alpha) .\label{eq:hook}
\end{gather}

\medskip

Following~\cite[Section~6]{ChapuyDolega} and specializing to the case of present interest, the multiparametric $b$-Hurwitz numbers associated with a formal power series $G(z)=1+\sum_{i\geq 1}g_iz^i\in\mathbb{C}[[z]]$ are defined through the following formal generating series
\begin{equation}
\label{eq:Jackexp}
\tau_G^b(\epsilon;\mathbf p) = \sum_{\lambda\in\mathcal P }\frac{1}{\mathsf{h}_{b+1}'(\lambda)}\mathrm{P}_\lambda^{(b+1)}(\mathbf p)\prod_{\square\in\mathsf{D}(\lambda)}G(\epsilon\,\mathsf{c}_{b+1}(\square)),
\end{equation}
where, for $\square=(i,j)\in\mathsf{D}(\lambda)$,
\begin{equation*}
\mathsf{c}_{\alpha}(\square) = \alpha(j-1)-(i-1) .
\end{equation*}

\begin{Definition}
\label{def:HN}
The \textit{$b$-Hurwitz number} $H_G^b(\lambda;r)$ is the coefficient of $\epsilon^rp_\lambda$ in $\tau_G^b(\epsilon;\mathbf p)$, namely
\begin{equation}
\label{eq:tau}
\tau_G^b(\epsilon;\mathbf p) = \sum_{\lambda\in\mathcal P }\sum_{r\geq 0}H_G^b(\lambda;r)\epsilon^rp_\lambda .
\end{equation}
\end{Definition}

The geometric meaning of $b$-Hurwitz numbers has been unveiled in~\cite{ChapuyDolega}.
Below we report a~different combinatorial interpretation which closely follows~\cite{BonzomChapuyDolega} instead, see Theorem~\ref{thm:geo} and Section~\ref{sec:geometry}.

\begin{Remark}
The definition in~\cite{ChapuyDolega} is more general.
An additional set of times $\mathbf q=(q_1,q_2,\dots)$ is considered and the Jack expansion~\eqref{eq:Jackexp} is replaced by
\begin{equation}
\label{eq:moregeneral}
\widetilde \tau^b_G(\epsilon;\mathbf p,\mathbf q) = \sum_{\lambda\in\mathcal P }\frac{\mathsf{h}_{b+1}(\lambda)}{\mathsf{h}'_{b+1}(\lambda)} \mathrm{P}_\lambda^{(b+1)}(\mathbf{p}) \mathrm{P}_\lambda^{(b+1)}(\mathbf q) \prod_{\square\in\mathsf{D}(\lambda)}G(\epsilon\,\mathsf{c}_{b+1}(\square))
\end{equation}
which then allows one to define more general $b$-Hurwitz numbers depending on two partitions~$\lambda$ and~$\mu$, by extracting the coefficient in front of $p_\lambda q_\mu$.
The reduction to~\eqref{eq:Jackexp} is performed by setting $q_1=1$ and $q_i=0$ for all~$i\geq 2$, and by using the special value $\mathrm{P}_\lambda^{(\alpha)}(1,0,0,\dots)=1/\mathsf{h}_\alpha(\lambda)$, cf.\
\cite[Chapter~VI, equation~(10.29)]{Macdonald}.
\end{Remark}

\begin{Remark}[connections to integrable systems]
The case $\beta=2$, i.e., $b=0$, falls back to the theory of multiparametric weighted Hurwitz numbers of Guay-Paquet, Harnad, Orlov~\cite{HarnadPaquet,HarnadOrlov}.
It is known in this case that the generating function $\tau^{b=0}_G$ satisfies the KP hierarchy in the times~$\mathbf p$ (in the same case $b=0$, the more general version in~\eqref{eq:moregeneral} satisfies the 2D~Toda hierarchy in the times $\mathbf p$, $\mathbf q$), a far-reaching generalization of Okounkov's seminal result~\cite{Okounkov}.
When $\beta=1$, i.e., $b=1$, a relation to the BKP hierarchy has been established by Bonzom, Chapuy, and Do\l\c{e}ga~\cite{BonzomChapuyDolega}.
\end{Remark}

\subsection[Jacobi beta ensemble and b-Hurwitz numbers]{Jacobi $\boldsymbol{\beta}$ ensemble and $\boldsymbol{b}$-Hurwitz numbers}

For $\lambda\in\mathcal P$, let us introduce the following notation for the correlators in~\eqref{eq:correlators}:
\begin{equation}
\label{eq:correlatorsnew}
\mathcal C_{\lambda}^{\sf J} (n,\beta,c,d ) =
\mathcal C_{\lambda_1,\dots ,\lambda_{\ell(\lambda)}}^{\sf J} (n,\beta,c,d ) ,\qquad
\mathcal C_{-\lambda}^{\sf J} (n,\beta,c,d ) =
\mathcal C_{-\lambda_1,\dots ,-\lambda_{\ell(\lambda)}}^{\sf J} (n,\beta,c,d ) .
\end{equation}

\begin{Theorem}
\label{thm}
Let $\lambda\in\mathcal P $. We have the following Laurent expansions as $n\to+\infty$ of the correlators defined in~\eqref{eq:correlatorsnew}:
\begin{gather}
\label{eq:thmJ1}
\frac 1{\mathsf{z}_\lambda}
\biggl(\frac\beta 2\biggr)^{\ell(\lambda)}\!\!
\biggl(\frac{\gamma+\delta}{\gamma n}\biggr)^{|\lambda|}
\mathcal C_{\lambda}^{\sf J} (n,\beta,c=n(\gamma-1)+1,d=n(\delta-1)+1 ) \!=\! \sum_{r\geq 0}\frac 1{n^r}H^b_{G_+^{\sf J}}(\lambda;r),\!
\\
\label{eq:thmJ2}
\frac 1{\mathsf{z}_\lambda}
\biggl(\frac\beta 2\biggr)^{\ell(\lambda)}\!\!
\biggl(\frac{\gamma}{(\gamma+\delta) n}\biggr)^{|\lambda|}
\mathcal C_{-\lambda}^{\sf J} (n,\beta,c=n\gamma+\tfrac 2\beta,d=n(\delta-1)+1 ) \!=\! \sum_{r\geq 0}\frac 1{n^r}H^b_{G_-^{\sf J}}(\lambda;r),\!
\end{gather}
where $b=\tfrac 2\beta-1$, $\gamma$, $\delta$ are arbitrary complex variables, and
\begin{equation}
\label{eq:G}
G_+^{\sf J}(z) = \frac{(1+z)\big(1+\frac z\gamma\big)}{1+\frac z{\gamma+\delta}} ,\qquad
G_-^{\sf J}(z) = \frac{(1+z)\big(1-\frac{z}{\gamma+\delta}\big)}{1-\frac z\gamma} .
\end{equation}
\end{Theorem}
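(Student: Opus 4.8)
The plan is to compute the correlators $\mathcal{C}^{\sf J}_\lambda$ and $\mathcal{C}^{\sf J}_{-\lambda}$ explicitly in closed form, using a generalization of the Selberg integral (the referenced Theorem~\ref{thm:ASintegral} and its inverse counterpart Corollary~\ref{cor:inverseSelberg}), and then to match the resulting expression term-by-term against the Jack-polynomial expansion~\eqref{eq:Jackexp} that defines the $b$-Hurwitz numbers. The conceptual heart of the identification is that both sides, after the indicated normalizations, are organized as sums over partitions $\lambda$ with coefficients built from products over the boxes $\square \in \mathsf{D}(\lambda)$; the goal is to show these box-products agree once one substitutes the weight generating functions $G_+^{\sf J}$ and $G_-^{\sf J}$ of~\eqref{eq:G}.

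First I would invoke Kadell's generalization of the Selberg integral to evaluate the averages of Jack polynomials against the Jacobi $\beta$-weight $w^{\sf J}_\beta$. The averages $\langle \mathrm{P}^{(2/\beta)}_\lambda \rangle$ have a known closed form as a ratio of products of Gamma functions indexed by the boxes of $\lambda$ (with shifts determined by $c$, $d$, $n$, and the content $\mathsf{c}_{b+1}(\square)$); this is the step where $\alpha = 2/\beta$, i.e.\ $b = \frac{2}{\beta}-1$, enters naturally. Since the power sums $p_k \mapsto x_1^k + \cdots + x_n^k$ are expressible in the Jack basis, the correlators~\eqref{eq:correlators} are finite linear combinations of such Jack averages; equivalently, the generating object $\sum_\lambda \mathcal{C}^{\sf J}_\lambda\, p_\lambda/\mathsf{z}_\lambda$ (suitably weighted) is a single partition-indexed sum whose $\lambda$-coefficient is a product over $\mathsf{D}(\lambda)$.

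Next I would perform the asymptotic substitution $c = n(\gamma-1)+1$, $d = n(\delta-1)+1$ (respectively $c = n\gamma + \tfrac{2}{\beta}$, $d = n(\delta-1)+1$ for the negative-power case) and expand each Gamma-function ratio as $n \to +\infty$. The prefactors $(\beta/2)^{\ell(\lambda)}$ and the powers of $n$, $\gamma$, $\delta$ appearing in~\eqref{eq:thmJ1}--\eqref{eq:thmJ2} are precisely the factors that convert the normalization $\langle \mathrm{P}^{(b+1)}_\lambda, \mathrm{P}^{(b+1)}_\lambda\rangle_\alpha = \mathsf{h}'_\alpha(\lambda)/\mathsf{h}_\alpha(\lambda)$ and the special value $\mathrm{P}^{(\alpha)}_\lambda(1,0,\dots) = 1/\mathsf{h}_\alpha(\lambda)$ into the $1/\mathsf{h}'_{b+1}(\lambda)$ weight of~\eqref{eq:Jackexp}. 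The key algebraic miracle to verify is that each per-box Gamma ratio, expanded in $1/n$, factors as $G^{\sf J}_\pm(\mathsf{c}_{b+1}(\square)/n)$ up to the already-extracted global prefactors — that is, the rational function $G^{\sf J}_\pm$ in~\eqref{eq:G} is exactly the ratio of the three linear factors $(1+z)$, $(1 \pm z/\gamma)$, $(1 \pm z/(\gamma+\delta))$ coming from the three families of Gamma arguments (associated to $c$, $d$, and $c+d$) evaluated at $z = \mathsf{c}_{b+1}(\square)/n$.

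\textbf{The main obstacle} I expect is twofold. First, Kadell's integral evaluates an average of a \emph{single} Jack polynomial, whereas the correlators encode products of several power sums; passing cleanly between the power-sum and Jack bases while keeping the box-product structure manifest requires the machinery of~\eqref{eq:Jackexp}, so the real work is recognizing the correlator generating series as already being of the form $\tau^b_G$ for the specific $G^{\sf J}_\pm$. Second, and more delicately, the asymptotic expansion must be shown to hold \emph{as a Laurent series in $1/n$ with the stated $b$-Hurwitz coefficients}, which means controlling the per-box Gamma ratios uniformly enough that the product over $\mathsf{D}(\lambda)$ reorganizes into $\prod_\square G^{\sf J}_\pm(\epsilon\,\mathsf{c}_{b+1}(\square))$ with $\epsilon = 1/n$; the negative-power case~\eqref{eq:thmJ2} is subtler because the weight shift $c = n\gamma + \tfrac{2}{\beta}$ produces the sign flips in $G^{\sf J}_-$, and verifying the inverse Selberg evaluation (Corollary~\ref{cor:inverseSelberg}) lands on the correct rational function is where I would spend the most care.
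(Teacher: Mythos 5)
Your proposal follows essentially the same route as the paper: Kadell's integral (plus an inversion lemma for the negative case, proved there via the Calogero--Sutherland operator) evaluates the Jack averages, the exponential generating function of correlators is converted to a Jack expansion via the Cauchy identity, and the resulting per-box coefficients are matched against $\prod_{\square}G_\pm^{\sf J}\bigl(\tfrac1n\mathsf{c}_{b+1}(\square)\bigr)$ after the stated substitutions. The only substantive piece you leave implicit is the exact Pochhammer identity~\eqref{eq:again} rewriting the Vandermonde-type Gamma ratio as $\prod_\square(n+\mathsf{c}_{2/\beta}(\square))/\mathsf{h}_{2/\beta}(\lambda)$ (an exact identity of rational functions in $n$, so no uniform asymptotic control is actually needed), which the paper proves by induction in the appendix.
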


The proof is contained in Section~\ref{sec:proof}.

\subsection{Laguerre limit}

The {\it Laguerre $\beta$ ensemble} (of size $n$) is the random point process on the positive half-line $(0,+\infty)$ with (almost surely) $n$ particles, the location of which is governed by the joint probability distribution $w^{\sf L}_\beta(\underline y;c)\, \mathrm{d} y_1\cdots\mathrm{d} y_n$ given by
\begin{equation*}
w^{\sf L}_\beta(\underline y;c) = \frac 1{\mathcal Z_\beta^{\sf L}} \prod_{1\leq i\leq n} \bigl(y_i^{\frac\beta 2 c-1} \mathrm{e}^{-y_i}\bigr)\prod_{1\leq i<j\leq n}|y_i-y_j|^\beta ,
\qquad \underline y=(y_1,\dots,y_n)\in (0,+\infty)^n ,
\end{equation*}
where $\beta,c>0$ and the normalization is explicitly given as
\begin{equation}
\label{eq:ZLUE}
\mathcal Z_\beta^{\sf L} = n! \prod_{1\leq i<j\leq n}\frac{\Gamma\bigl(\tfrac\beta 2(j-i+1)\bigr)}{\Gamma\bigl(\tfrac\beta 2(j-i)\bigr)} \prod_{1\leq i\leq n}\Gamma\bigl(\tfrac\beta 2(c+n-i)\bigr) .
\end{equation}
Again, the cases $\beta=1,2,4$ correspond to eigenvalue distributions of well-known random matrix ensembles (respectively, orthogonal, unitary, and symplectic Laguerre ensembles, see, e.g.,~\cite{Forrester}) while for general $\beta$ a model of tridiagonal random matrices has been given in~\cite{DumitriuEdelman}.

We analogously define correlators as the following expectation values, for integers $k_1,\dots,k_\ell$:%
\begin{equation}
\label{eq:correlatorsLUE}
\mathcal C_{k_1,\dots ,k_\ell}^{\sf L}(n,\beta,c) = \int_{(0,+\infty)^n} \biggl(\prod_{1\leq i\leq \ell}\big(y_1^{k_i}+\dots+x_n^{y_i}\big)\biggr) w^{\sf L}_\beta(\underline y;c)\, \mathrm{d} y_1\cdots\mathrm{d} y_n .
\end{equation}

The Laguerre $\beta$~ensemble is a limit of the Jacobi $\beta$~ensemble.
For instance,~\eqref{eq:ZLUE} can be deduced from~\eqref{eq:Z} by a change of integration variables and the limit $d\to+\infty$.
Moreover, we have
\begin{equation*}
\lim_{d\to+\infty}\biggl(\frac\beta 2d\biggr)^{k_1+\dots+k_\ell} \mathcal C_{k_1,\dots ,k_\ell}^{\sf J}(n,\beta,c,d) = \mathcal C_{k_1,\dots ,k_\ell}^{\sf L}(n,\beta,c)
\end{equation*}
for all integers $k_1,\dots, k_\ell$.
Hence, Theorem~\ref{thm} implies the following expansions of Laguerre correlators.

\begin{Theorem}
\label{thm:Laguerre}
Let $\lambda\in\mathcal P $. We have the following Laurent expansions as $n\to+\infty$ of the correlators defined in~\eqref{eq:correlatorsLUE}:
\begin{align}
\label{eq:thmL1}
&\frac 1{\mathsf{z}_\lambda}
\biggl(\frac\beta2\biggr)^{\ell(\lambda)-|\lambda|}
\bigl(\gamma n^2\bigr)^{-|\lambda|}
 \mathcal C_{\lambda_1,\dots ,\lambda_{\ell(\lambda)}}^{\sf L}(n,\beta,c=n(\gamma-1)+1) = \sum_{r\geq 0}\frac 1{n^r} H^b_{G_+^{\sf L}}(\lambda;r) ,
\\
\label{eq:thmL2}
&\frac 1{\mathsf{z}_\lambda}
\biggl(\frac\beta2\biggr)^{\ell(\lambda)+|\lambda|}
\gamma^{|\lambda|}
 \mathcal C_{-\lambda_1,\dots ,-\lambda_{\ell(\lambda)}}^{\sf L}\bigl(n,\beta,c=n\gamma+\tfrac 2\beta\bigr)
 = \sum_{r\geq 0}\frac 1{n^r} H^b_{G_-^{\sf L}}(\lambda;r) ,
\end{align}
where $b=\tfrac 2\beta-1$, $\gamma$ is an arbitrary complex variable, and
\begin{equation*}
G_+^{\sf L}(z) = (1+z)\big(1+\tfrac z\gamma\big) ,\qquad
G_-^{\sf L}(z) = \frac{1+z}{1-\frac z\gamma} .
\end{equation*}
\end{Theorem}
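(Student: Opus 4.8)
The plan is to obtain Theorem~\ref{thm:Laguerre} from Theorem~\ref{thm} by performing the Laguerre degeneration $d\to+\infty$. In the parametrization of Theorem~\ref{thm} one has $d=n(\delta-1)+1$, so at fixed $n$, $\gamma$, $\beta$ this degeneration is realized by sending the auxiliary variable $\delta\to+\infty$. Two ingredients feed into this. The first is the correlator limit recorded above, namely
\begin{equation*}
\lim_{d\to+\infty}\biggl(\frac\beta2 d\biggr)^{k_1+\dots+k_\ell}\mathcal C^{\sf J}_{k_1,\dots,k_\ell}(n,\beta,c,d)=\mathcal C^{\sf L}_{k_1,\dots,k_\ell}(n,\beta,c),
\end{equation*}
which in our two cases reads $\mathcal C^{\sf J}_{\pm\lambda}\sim\bigl(\tfrac\beta2 d\bigr)^{\mp|\lambda|}\mathcal C^{\sf L}_{\pm\lambda}$ as $\delta\to+\infty$. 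The second is the elementary observation that the weight generating functions in~\eqref{eq:G} degenerate to those of Theorem~\ref{thm:Laguerre}: since the $\delta$-dependent factor $\bigl(1\pm\tfrac{z}{\gamma+\delta}\bigr)^{\mp1}$ tends to $1$, we have $G^{\sf J}_\pm(z)\to G^{\sf L}_\pm(z)$ coefficientwise as $\delta\to+\infty$.

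First I would check that the left-hand sides match after the limit. In the positive case, substituting $\mathcal C^{\sf J}_\lambda\sim\bigl(\tfrac\beta2 d\bigr)^{-|\lambda|}\mathcal C^{\sf L}_\lambda$ and using $d\sim n\delta$ together with $\gamma+\delta\sim\delta$, the prefactor of~\eqref{eq:thmJ1} becomes
\begin{equation*}
\biggl(\frac\beta2\biggr)^{\ell(\lambda)}\biggl(\frac{\gamma+\delta}{\gamma n}\biggr)^{|\lambda|}\biggl(\frac\beta2 d\biggr)^{-|\lambda|}\longrightarrow\biggl(\frac\beta2\biggr)^{\ell(\lambda)-|\lambda|}\bigl(\gamma n^2\bigr)^{-|\lambda|},
\end{equation*}
the powers of $\delta$ cancelling exactly; this is precisely the prefactor of~\eqref{eq:thmL1}. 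The negative case is identical: with $\mathcal C^{\sf J}_{-\lambda}\sim\bigl(\tfrac\beta2 d\bigr)^{|\lambda|}\mathcal C^{\sf L}_{-\lambda}$ the factor $\bigl(\tfrac{\gamma}{(\gamma+\delta)n}\bigr)^{|\lambda|}\bigl(\tfrac\beta2 d\bigr)^{|\lambda|}$ collapses to $\bigl(\tfrac\beta2\bigr)^{|\lambda|}\gamma^{|\lambda|}$, matching~\eqref{eq:thmL2}. Thus, in the limit $\delta\to+\infty$, the left-hand side of each Jacobi identity becomes the left-hand side of the corresponding Laguerre identity, while on the right-hand side each coefficient $H^b_{G^{\sf J}_\pm}(\lambda;r)$ tends to $H^b_{G^{\sf L}_\pm}(\lambda;r)$: indeed, for fixed $r$ the $b$-Hurwitz number $H^b_G(\lambda;r)$ is a polynomial in the finitely many Taylor coefficients $g_1,\dots,g_r$ of $G$ (read off from~\eqref{eq:Jackexp}), and these converge as $G^{\sf J}_\pm\to G^{\sf L}_\pm$.

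The step I expect to be the main obstacle is justifying that the limit $\delta\to+\infty$ commutes with the extraction of the coefficient of $n^{-r}$ in the Laurent expansion at $n=\infty$. This is not automatic: if the poles in $n$ of the normalized correlator escaped to infinity as $\delta\to+\infty$, the Laurent coefficients could fail to converge to those of the limiting function. To rule this out I would invoke the Remark after~\eqref{eq:correlators}, that the correlators are rational functions of $n,\beta,c,d$, hence of $n,\beta,\gamma,\delta$ after substitution; I expect that the explicit Gamma-ratio structure entering through~\eqref{eq:Z} and Kadell's evaluation (which underlies Theorem~\ref{thm}) keeps these poles in a bounded region uniformly for large $\delta$, the scaling $\bigl(\tfrac\beta2 d\bigr)^{\pm|\lambda|}$ precisely absorbing the $\delta$-growth. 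This yields uniform convergence of the normalized correlators on a fixed annulus $|n|>R$, so that the $r$-th Laurent coefficient of the Laguerre side equals $\lim_{\delta\to+\infty}H^b_{G^{\sf J}_\pm}(\lambda;r)=H^b_{G^{\sf L}_\pm}(\lambda;r)$. Matching coefficients of $n^{-r}$ then gives~\eqref{eq:thmL1}--\eqref{eq:thmL2}.
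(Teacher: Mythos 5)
Your proposal is correct and follows essentially the same route as the paper, which simply records the correlator limit $\lim_{d\to+\infty}\bigl(\tfrac\beta2 d\bigr)^{k_1+\dots+k_\ell}\mathcal C^{\sf J}_{k_1,\dots,k_\ell}(n,\beta,c,d)=\mathcal C^{\sf L}_{k_1,\dots,k_\ell}(n,\beta,c)$ and states that Theorem~\ref{thm} then implies the result. Your prefactor bookkeeping and the degeneration $G^{\sf J}_\pm\to G^{\sf L}_\pm$ are exactly right, and your extra care about interchanging $\delta\to+\infty$ with the extraction of Laurent coefficients in $1/n$ (settled via rationality of the correlators) supplies a justification the paper leaves implicit.
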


\begin{Remark}
The expansions of Theorem~\ref{thm:Laguerre} are known for $\beta=2$ in terms of ($b=0$)~Hurwitz numbers~\cite{CundenDahlqvistOConnell,GLM,HanlonSS}.
The result for the positive Laguerre correlators is also mentioned in~\cite[Appendix~A]{BonzomChapuyDolega}, in which case, the relevant $b$-Hurwitz numbers are given an equivalent combinatorial meaning in terms of bipartite (possibly non-orientable) maps.
It would be interesting to compare with the results of~\cite{GLMhyper} for the Laguerre orthogonal ensemble (i.e.,~$\beta=1$) and the hyperoctahedral group.
\end{Remark}

\subsection{Colored monotone Hurwitz maps}\label{sec:introgeo}

We now give a combinatorial model for the above expansions, inspired by~\cite{BonzomChapuyDolega}.
We will consider arbitrary rational weight generating functions $G$, covering all cases considered above.

\begin{Remark}
In principle, a combinatorial interpretation for $b$-Hurwitz numbers for arbitrary~$G$ is implicit from~\cite[Theorem~6.2]{ChapuyDolega} in terms of a weighted count of (non-orientable) generalized branched coverings of the sphere, but we prefer to give a (perhaps) more explicit description when $G$ is rational following~\cite{BonzomChapuyDolega}.
\end{Remark}

The notion of monotone non-orientable Hurwitz map has been proposed in~\cite[Section~3]{BonzomChapuyDolega}.
We recall it here for the reader's convenience and refer to loc.\ cit.\ for more details.

\begin{Definition}[monotone Hurwitz maps~\cite{BonzomChapuyDolega}]
A \textit{monotone Hurwitz map} $\Gamma$ is an embedding of a loopless multigraph in a compact (possibly non-orientable) surface with the following properties.
\begin{enumerate}\itemsep=0pt
\item The complement of the multigraph in the surface is homeomorphic to a disjoint union of disks, called \textit{faces}.
\item The $n$ vertices of $\Gamma$ are labeled with the numbers in $\lbrace 1,\dots,n\rbrace$ and the $r$ edges of $\Gamma$ are labeled $e_1,\dots, e_r$.
For each $e_i$ we denote $a_i$, $b_i$ the vertices connected by $e_i$ such that~${a_i<b_i}$.
\item We have $b_1\leq\dots\leq b_r$.
\item A neighborhood of each vertex of $\Gamma$ is equipped with an orientation.
\item Each vertex of $\Gamma$ is equipped with a distinguished sector between consecutive half-edges, which is termed \textit{active corner}.
\item For each $i\in\lbrace 1,\dots,r\rbrace$, let $\Gamma_i$ be obtained from $\Gamma$ by removing the edges $e_{i+1},\dots,e_r$.
Note that $\Gamma_i$ might be embedded in a different compact surface.
In the map $\Gamma_i$ the following conditions must be met:
\begin{itemize}\itemsep=0pt
\item[--] the active corner at $b_i$ immediately follows $e_i$;
\item[--] the active corner at $a_i$ is opposite (with respect to $e_i$) to the active corner at $b_i$;
\item[--] if the edge $e_i$ is disconnecting in $\Gamma_i$, the local orientations at $a_i$, $b_i$ are compatible (i.e., they extend to an orientation of a neighborhood of $e_i$).
\end{itemize}
\end{enumerate}
The \textit{degree} of a face is the number of active corners in that face.
The \textit{profile} of a monotone Hurwitz map is the partition whose parts are the degrees of its faces.
\end{Definition}

See Figure~\ref{fig} for an example.
Note that the definition allows maps containing connected components consisting of a single point embedded in the sphere.

\begin{Remark}
This definition allows one to iteratively construct monotone Hurwitz maps by adding an extra vertex of maximum label and edges incident to it with increasing labels (the possible ways of attaching these new edges are specified by property 6 above).
This key fact is used in the proof of~\cite[Proposition~3.2]{BonzomChapuyDolega}, and we will similarly employ it in the proof of Theorem~\ref{thm:geo} below.
Moreover, it is clear by this inductive construction that any face has at least one active corner.
Therefore, a monotone Hurwitz map with $r$ edges and profile $\lambda$ has Euler characteristic
\begin{equation}
\label{eq:char}
\chi = |\lambda|-r+\ell(\lambda) .
\end{equation}
\end{Remark}

One of the main constructions in~\cite{ChapuyDolega} is the definition of weight of monotone Hurwitz maps\footnote{Actually, in loc.\ cit. the $b$-weight is defined for more general combinatorial objects, namely non-orientable constellations.
The restriction to this case is done in~\cite{BonzomChapuyDolega}.} which is monomial in a variable $b$, termed \textit{b-weight}.

\begin{Definition}[$b$-weight of monotone Hurwitz maps~\cite{BonzomChapuyDolega}]
Let a monotone non-orientable Hurwitz map~$\Gamma$ be given.
We iteratively remove either the vertex of maximum label (if it is isolated) or the edge of maximum label (which is necessarily incident to the vertex of maximum label). Doing so we also record a weight $1$ or $ b$ each time we delete an edge, chosen as follows.
Let $e$ the edge being deleted and $\Gamma'$ the graph after deletion of $e$. The weight is decided according to the following rules:
\begin{itemize}\itemsep=0pt
\item if $e$ joins two active corners in the same face of $\Gamma'$, the weight is $1$ if $e$ splits the face into two, and $b$ otherwise;
\item if $e$ joins two active corners in different faces of $\Gamma'$, the weight is $1$ if the local orientations at the joined vertices are compatible (i.e., they extend to an orientation of a neighborhood of $e$), and $b$ otherwise.
\end{itemize}
The product of all weights recorded during this procedure (performed until the graph is empty) is the \textit{$b$-weight} of $\Gamma$, which is a power of $b$ denoted $b^{\nu(\Gamma)}$.
\end{Definition}

Note that $\nu(\Gamma)=0$ if and only if $\Gamma$ is orientable.

\begin{Remark}
The $b$-weight is a necessary ingredient to give an enumerative interpretation to~\eqref{eq:Jackexp}, cf.\ Theorem~\ref{thm:geo} below.
Remarkably, the $b$-weight is \textit{neither unique nor canonical} (even if the generating function is).
More precisely, such a $b$-weight is only subject to the constraints of a \textit{measure of non-orientability} in the sense of~\cite[Section~3]{ChapuyDolega} which however do not fix it uniquely.

In the definition above, we have exploited the fixed local orientations at the vertices to make a choice in order to uniquely define \textit{a} $b$-weight.
More generally, when the edge $e$ joins two active corners in different faces, the $b$-weight could be arbitrarily chosen to be $1$ or $b$, with the caveat to choose different $b$-weights for different \textit{twists} of the edge $e$ and to choose $1$ in the orientable case.
We content ourself with the practical definition above and refer to~\cite{BonzomChapuyDolega,ChapuyDolega} for more details.
\end{Remark}

We need one last definition.

\begin{Definition}[colored monotone Hurwitz maps]
Let $L,M$ be nonnegative integers.
An \textit{$(L|M)$-coloring} of a monotone Hurwitz map $\Gamma$ is a mapping $c\colon\lbrace 1,\dots,r\rbrace\to\lbrace 1,\dots,L+M\rbrace$ such that:
\begin{itemize}\itemsep=0pt
\item if $1\leq i<j\leq r$ and $1\leq c(i)=c(j)\leq L$, then $b_i<b_j$;
\item if $1\leq i<j\leq r$ and $b_i=b_j$, then $c(i)\leq c(j)$.
\end{itemize}
An \textit{$(L|M)$-colored monotone Hurwitz map} is a monotone Hurwitz map with an $(L|M)$-coloring.
\end{Definition}

For example, a $(0|1)$-colored monotone Hurwitz map is just a monotone Hurwitz map.
See Figure~\ref{fig} for another example.

\begin{Theorem}
\label{thm:geo}
For a set of parameters $u_1,\dots,u_{L+M}$, let
\begin{equation*}
G(z)\,=\,\frac{\prod_{i=1}^L(1+u_iz)}{\prod_{i=1}^M(1-u_{L+i}z)} .
\end{equation*}
Then, for all $\lambda\in\mathcal P $ and $r\geq 0$,
\begin{equation*}
H_G^b(\lambda;r) = \frac 1{|\lambda|!}
\sum_{(\Gamma,c)}\frac{b^{\nu(\Gamma)}}{(1+b)^{|\pi_0(\Gamma)|}} u_{c(1)}\cdots u_{c(r)},
\end{equation*}
where $(\Gamma,c)$ runs in the set of $(L|M)$-colored monotone Hurwitz maps with $r$ edges and profile~$\lambda$ (hence, $|\lambda|$ vertices).
Moreover, $b^{\nu(\Gamma)}$ is the $b$-weight of~$\Gamma$ and $\pi_0(\Gamma)$ the set of connected components of~$\Gamma$.
\end{Theorem}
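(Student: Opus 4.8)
The plan is to read off $H_G^b(\lambda;r)$ from the Jack expansion~\eqref{eq:Jackexp} and to recognize it as built up from the monotone ($b$-deformed) Hurwitz numbers of~\cite{BonzomChapuyDolega} by attaching one factor of the rational function $G$ at a time. The structural fact I would exploit is that the content product $\prod_{\square\in\mathsf D(\lambda)}G(\epsilon\,\mathsf c_{b+1}(\square))$ is multiplicative in $G$:
\begin{equation*}
\prod_{\square\in\mathsf D(\lambda)}G(\epsilon\,\mathsf c_{b+1}(\square))=\prod_{i=1}^{L}\ \prod_{\square\in\mathsf D(\lambda)}\bigl(1+u_i\epsilon\,\mathsf c_{b+1}(\square)\bigr)\ \prod_{i=1}^{M}\ \prod_{\square\in\mathsf D(\lambda)}\frac{1}{1-u_{L+i}\epsilon\,\mathsf c_{b+1}(\square)},
\end{equation*}
so each factor is the content product of one of the two \emph{atomic} weight generating functions $1+uz$ or $(1-uz)^{-1}$. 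The operator on $\Lambda$ that rescales the coefficient of $\mathrm P_\lambda^{(b+1)}$ by such a content product is diagonal in the Jack basis; these operators therefore commute, and $\tau_G^b$ is obtained from the trivial series $\tau_1^b=\sum_\lambda \mathrm P_\lambda^{(b+1)}(\mathbf p)/\mathsf h'_{b+1}(\lambda)$ by applying them successively. I would accordingly argue by induction on the number $L+M$ of factors.

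For the base $L+M=0$ one has $G=1$, and extracting the coefficient of $\epsilon^r p_\lambda$ amounts to evaluating $[p_\lambda]\tau_1^b$: this vanishes unless $r=0$ and $\lambda=(1^{|\lambda|})$, which matches the edgeless maps (each of the $|\lambda|$ isolated vertices being its own component), so the base case is a direct check. The first inductive step, producing one color, is exactly the combinatorial interpretation of monotone $b$-Hurwitz numbers of~\cite{BonzomChapuyDolega} (see~\cite[Proposition~3.2]{BonzomChapuyDolega}), after converting from their $\mathrm J$-normalization to the $\mathrm P$-normalization via $\mathrm J_\lambda^{(\alpha)}=\mathsf h_\alpha(\lambda)\mathrm P_\lambda^{(\alpha)}$ and the specialization $\mathrm P_\lambda^{(\alpha)}(1,0,0,\dots)=1/\mathsf h_\alpha(\lambda)$. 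This fixes the meaning of a single color: the atomic factor $(1-uz)^{-1}$ corresponds to a color class that is \emph{weakly} monotone in the top-vertices $b_i$, while $1+uz$ corresponds to a \emph{strictly} monotone one, each edge contributing the parameter $u$. These are precisely the two regimes enforced by the coloring—the first bullet of the definition imposes strict monotonicity exactly for the numerator colors $1\le c\le L$, whereas the denominator colors $c>L$ are left weakly monotone.

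The inductive step is the crux. Multiplying $G$ by an extra factor $(1+u_0z)$ (resp.\ $(1-u_0z)^{-1}$) should correspond to interleaving a new, strictly (resp.\ weakly) monotone color class into an existing $(L|M)$-colored map, and I would set up a weight-preserving bijection realizing this. The tie-breaking rule $c(i)\le c(j)$ whenever $b_i=b_j$ is what makes the interleaving unambiguous and reconciles the fixed order of the colors with the weakly increasing edge-label order $b_1\le\dots\le b_r$. The difficulty is that $b^{\nu(\Gamma)}$ and the factor $(1+b)^{-|\pi_0(\Gamma)|}$ are defined through the \emph{global} edge-deletion procedure, ordered by edge label, which does not visibly respect the color-by-color decomposition. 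The heart of the argument is therefore to verify that inserting the new color's edges changes $\nu(\Gamma)$ and $|\pi_0(\Gamma)|$ exactly as dictated by the corresponding change of the Jack-basis content product; concretely, I expect to re-run the edge-addition analysis of~\cite{BonzomChapuyDolega} in the presence of colors and check that the measure-of-non-orientability weight is stable under the relabeling forced by the interleaving.

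Finally I would assemble the global bookkeeping. The prefactor $1/|\lambda|!$ accounts for the labeling of the $|\lambda|$ vertices by $\{1,\dots,|\lambda|\}$ against the unlabeled power sum $p_\lambda$, the monomial $u_{c(1)}\cdots u_{c(r)}$ records the color, hence the parameter, of each of the $r$ edges, and the telescoping product of atomic content products reassembles $\prod_{\square\in\mathsf D(\lambda)}G(\epsilon\,\mathsf c_{b+1}(\square))$. Matching the coefficient of $\epsilon^r p_\lambda$ on the two sides then yields the claimed formula for $H_G^b(\lambda;r)$.
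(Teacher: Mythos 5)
Your overall reading of the statement is sound, and several pieces of your outline match the paper: the base case $G=1$ (edgeless maps, each vertex its own component, giving $\frac{1}{m!(1+b)^m}$ against $[p_{(1^m)}]\mathrm e^{p_1/(1+b)}$), and the identification of the two monotonicity regimes (strict in the $b_i$ for the numerator colours $1\le c\le L$, weak for the denominator colours, reconciled by the tie-breaking rule $c(i)\le c(j)$ when $b_i=b_j$). But the induction you propose --- on the number $L+M$ of atomic factors of $G$ --- has a genuine gap at exactly the point you flag as ``the heart of the argument'', and it is not one that a routine check closes. The algebraic multiplicativity of the content product lives in the Jack basis, while the weight $b^{\nu(\Gamma)}(1+b)^{-|\pi_0(\Gamma)|}$ lives in the $p_\lambda$ basis and is defined by deleting edges in decreasing label order: when you interleave a new colour class into an existing coloured map, every old edge acquires a new label, and the sub-map present at the moment it is deleted now contains new-colour edges of smaller label, so its individual weight contribution can change. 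There is no product formula of the form $b^{\nu(\Gamma)}=b^{\nu(\Gamma_{\mathrm{old}})}\,b^{\nu(\mathrm{new})}$, and the diagonal operator ``rescale the $\mathrm P_\lambda^{(b+1)}$-coefficient by $\prod_{\square}\bigl(1+u_0\epsilon\,\mathsf c_{b+1}(\square)\bigr)$'' has no visible local action on an already-built coloured map. Constructing the weight-preserving interleaving bijection you posit is therefore at least as hard as the theorem itself.

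The paper sidesteps this entirely by inducting on the number of \emph{vertices} (Theorem~\ref{thm:HM}): a coloured map with $n$ vertices is obtained uniquely from one with $n-1$ vertices by adding an isolated vertex $n$ and then edges incident to it, colour by colour in increasing order (at most one edge per colour $\le L$, arbitrarily many per colour $>L$). Each edge addition is the operator $\Lambda$ of~\cite{BonzomChapuyDolega}, so one whole vertex step is $G(\Lambda)\frac{z_1}{1+b}(\cdot)\big|_{z_i=p_i}$, and the match with the Jack expansion comes from the evolution equation $\frac{\partial}{\partial t}\mathcal T=G(\Lambda)\frac{z_1}{1+b}\mathcal T\big|_{z_i=p_i}$ of~\cite[equation~(62)]{ChapuyDolega}. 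That identity is the ingredient your outline is missing: it converts the content product over $\mathsf D(\lambda)$ into a combinatorial operation without ever comparing $b$-weights across two different global edge labellings. If you reorganize your induction so that all $L+M$ colours are processed for one new vertex at a time, your argument essentially becomes the paper's.
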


The proof is contained in Section~\ref{sec:geometry} and is a generalization of the argument in~\cite[Proposition~3.2]{BonzomChapuyDolega}.
We consider the reduction to the orientable case ($b=0$) in Section~\ref{sec:orientable}.

\begin{Remark}
We consider the redundant parameter $\epsilon$ in~\eqref{eq:Jackexp} to make clear the large-$n$ expansion of correlators~\eqref{eq:thmJ1}--\eqref{eq:thmJ2} and~\eqref{eq:thmL1}--\eqref{eq:thmL2}.
Because of~\eqref{eq:char}, these large-$n$ expansions can be interpreted as {\it topological expansions} (over not necessarily orientable geometries) after a~global rescaling of correlators by a power of $n$.
\end{Remark}

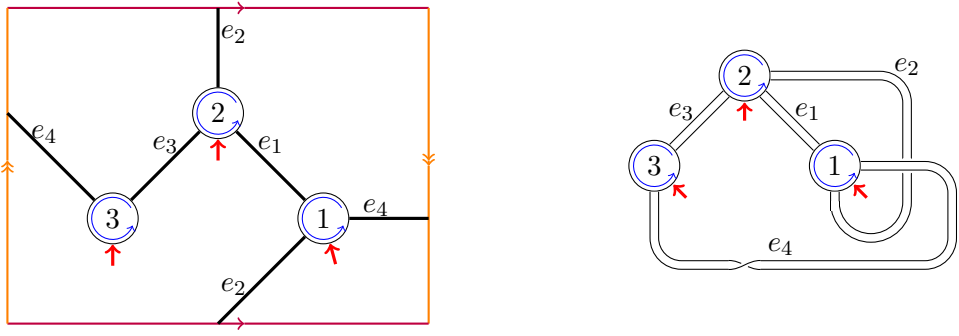
\begin{figure}[t]
\center
\begin{tikzpicture}
\begin{scope}[shift={(-3.5,0)}]
\begin{scope}[scale=.7]

\draw[->,thick,color=purple] (-4,3) -- (.5,3);
\draw[thick,color=purple] (.5,3) -- (4,3);

\draw[->,thick,color=orange] (4,3) -- (4,.1);
\draw[->,thick,color=orange] (4,.1) -- (4,0);
\draw[thick,color=orange] (4,0) -- (4,-3);

\draw[->,thick,color=purple] (-4,-3) -- (.5,-3);
\draw[thick,color=purple] (.5,-3) -- (4,-3);

\draw[->,thick,color=orange] (-4,-3) -- (-4,0);
\draw[->,thick,color=orange] (-4,0) -- (-4,.1);
\draw[thick,color=orange] (-4,.1) -- (-4,3);

\node at (2,-1) [circle,draw] (1) {$1$};
\node at (0,1) [circle,draw] (2) {$2$};
\node at (-2,-1) [circle,draw] (3) {$3$};

\draw[very thick] (3) -- (2) -- (1) -- (4,-1);
\draw[very thick] (-4,1) -- (3);
\draw[very thick] (2) -- (0,3);
\draw[very thick] (1) -- (0,-3);

\node at (1,.4) {$e_1$};
\node at (0.3,2.5) {$e_2$};
\node at (0.3,-2.3) {$e_2$};
\node at (-1,.4) {$e_3$};
\node at (3,-.8) {$e_4$};
\node at (-3.3,.6) {$e_4$};

\begin{scope}[shift={(1)}]
\begin{scope}[rotate=30]
\draw[->,color=blue] (0.4,0) arc (0:310:.4);
\end{scope}
\end{scope}

\begin{scope}[shift={(2)}]
\begin{scope}[rotate=30]
\draw[->,color=blue] (0.4,0) arc (0:310:.4);
\end{scope}
\end{scope}

\begin{scope}[shift={(3)}]
\begin{scope}[rotate=30]
\draw[->,color=blue] (0.4,0) arc (0:310:.4);
\end{scope}
\end{scope}

\begin{scope}[shift={(1)}]
\begin{scope}[rotate=15]
\draw[color=red,->,very thick] (0,-.9) -- (0,-.5);
\end{scope}
\end{scope}

\begin{scope}[shift={(2)}]
\begin{scope}[rotate=0]
\draw[color=red,->,very thick] (0,-.9) -- (0,-.5);
\end{scope}
\end{scope}

\begin{scope}[shift={(3)}]
\draw[color=red,->,very thick] (0,-.9) -- (0,-.5);
\end{scope}

\end{scope}
\end{scope}

\begin{scope}[shift={(3.5,0)}]
\begin{scope}[scale=1.2]

\node at (1,0) [circle,draw] (1) {$1$};
\node at (0,1) [circle,draw] (2) {$2$};
\node at (-1,0) [circle,draw] (3) {$3$};

\begin{scope}[shift={(1)}]
\begin{scope}[rotate=30]
\draw[->,color=blue] (0.23,0) arc (0:310:.23);
\end{scope}
\end{scope}

\begin{scope}[shift={(2)}]
\begin{scope}[rotate=30]
\draw[->,color=blue] (0.23,0) arc (0:310:.23);
\end{scope}
\end{scope}

\begin{scope}[shift={(1)}]
\begin{scope}[rotate=45]
\draw[color=red,->,very thick] (0,-.5) -- (0,-.3);
\end{scope}
\end{scope}

\begin{scope}[shift={(2)}]
\begin{scope}[rotate=0]
\draw[color=red,->,very thick] (0,-.5) -- (0,-.3);
\end{scope}
\end{scope}

\begin{scope}[shift={(3)}]
\begin{scope}[rotate=30]
\draw[->,color=blue] (0.23,0) arc (0:310:.23);
\end{scope}
\end{scope}

\begin{scope}[shift={(3)}]
\begin{scope}[rotate=45]
\draw[color=red,->,very thick] (0,-.5) -- (0,-.3);
\end{scope}
\end{scope}

\draw[double distance=1mm] (3) -- (2) -- (1);

\draw[double distance=1mm, line cap=white] (2) -- (1.5,1) arc (90:0:.3) -- (1.8,-.4) arc (0:-180:.4) -- (1,-.5) -- (1);

\draw[fill=white, color=white] (1.8,0) circle (2.5pt);

\draw[double distance=1mm, line cap=white] (1) -- (2,0) arc (90:0:.3) -- (2.3,-.8) arc (0:-90:.3) -- (-.7,-1.1) arc (-90:-180:.3) -- (3);

\draw[fill=white, color=white] (0,-1.1) circle (5pt);

\node at (.7,.6) {$e_1$};
\node at (1.8,1.1) {$e_2$};
\node at (-.7,.6) {$e_3$};
\node at (.4,-.9) {$e_4$};

\draw (5pt,-1.15) to[out=180, in=0] (-5pt,-1.05);
\draw[fill=white, color=white] (0,-1.1) circle (1pt);
\draw (5pt,-1.05) to[out=180, in=0] (-5pt,-1.15);

\end{scope}
\end{scope}
\end{tikzpicture}
\caption{On the left, a monotone Hurwitz map with $n=3$ vertices and $r=4$ edges embedded in the Klein bottle.
Active corners are indicated by red arrows and local orientation around vertices are indicated by blue arrows.
It has $b$-weight $b$ and profile $(3)$.
On the right, the same non-orientable labeled Hurwitz map depicted as a \textit{ribbon graph}, which is a small open neighborhood of the graph in the surface.
For this example, the mapping $c(1)=1=c(3)$, $c(2)=2=c(4)$ is simultaneously a $(2|0)$-, $(1|1)$-, or $(0|2)$-coloring; the mapping $c(1)=1=c(2)$, $c(3)=2=c(4)$ is a $(0|2)$-coloring.}
\label{fig}
\end{figure}

\section[Proof of Theorem~\ref{thm}]{Proof of Theorem~\ref{thm}}\label{sec:proof}

In this section, we will denote
\begin{equation*}
\widetilde{\mathrm{P}}_\lambda^{(\alpha)}(x_1,\dots,x_n) = \mathrm{P}_\lambda^{(\alpha)}(\mathbf p)|_{p_k=\sum_{i=1}^nx_i^k} .
\end{equation*}

A crucial ingredient in the proof is the following formula for the expectation of Jack polynomials with respect to the Jacobi weight which is due to Kadell~\cite{Kadell} (see also~\cite[Conjecture~C5]{MacdonaldAG} and~\cite[Chapter~VI.10]{Macdonald}) and which generalizes the celebrated Selberg integral~\cite{Selberg}.

\begin{Theorem}[\cite{Kadell}]
\label{thm:ASintegral}
For all $\beta,c,d>0$ and all $\lambda\in\mathcal P $, we have
\begin{align*}
&\frac 1{n!}\int_{(0,1)^n}\widetilde{\mathrm{P}}_\lambda^{(2/\beta)}(\underline x) \prod_{1\leq i\leq n} \bigl(x_i^{\frac\beta 2 c-1} (1-x_i)^{\frac\beta 2 d-1}\bigr) \prod_{1\leq i<j\leq n}|x_i-x_j|^\beta\, \mathrm{d} x_1\cdots\mathrm{d} x_n
\\
&\qquad =
\prod_{1\leq i<j\leq n}\frac{\Gamma\bigl(\lambda_i-\lambda_j+\tfrac\beta 2(j-i+1)\bigr)}{\Gamma\bigl(\lambda_i-\lambda_j+\tfrac\beta 2(j-i)\bigr)}
\prod_{1\leq i\leq n}\frac{\Gamma\bigl(\lambda_i+\tfrac \beta2(c+n-i)\bigr) \Gamma\bigl(\tfrac \beta2(d+n-i)\bigr)}{\Gamma\bigl(\lambda_i+\tfrac \beta2(c+d+2n-i-1)\bigr)} .
\end{align*}
\end{Theorem}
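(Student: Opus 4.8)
The plan is to establish the formula by first reducing to the computation of a single Jacobi-$\beta$ average of $\mathrm{P}_\lambda^{(2/\beta)}$, and then determining that average by propagating from the empty partition through a box-removal recursion matched against the explicit Gamma quotient. \textbf{Base case and reduction.} When $\lambda=\varnothing$ the Jack polynomial is $1$ and the integral collapses to the classical Selberg integral, whose value is precisely $\mathcal Z^{\sf J}_\beta/n!$ in \eqref{eq:Z}; I would take this as the starting point (or reprove it by Aomoto's integration-by-parts recursion). Dividing the claimed identity by this base case, it suffices to compute the normalized average
\begin{equation*}
S_\lambda := \frac{1}{\mathcal Z_\beta^{\sf J}}\int_{(0,1)^n} \mathrm{P}_\lambda^{(2/\beta)}\bigl(\textstyle\sum_i x_i^k\bigr)_{k\ge 1}\, \prod_{1\le i\le n}\bigl(x_i^{\frac\beta2 c-1}(1-x_i)^{\frac\beta2 d-1}\bigr)\prod_{i<j}|x_i-x_j|^\beta\, \mathrm{d} x_1\cdots\mathrm{d} x_n
\end{equation*}
and to show it equals the ratio of the right-hand side for $\lambda$ over the right-hand side for $\varnothing$.

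\textbf{Main recursion.} To reach general $\lambda$ I would derive a first-order difference equation in the parts of $\lambda$ by Aomoto's method adapted to Jack polynomials: integrating the total derivative $\partial_{x_1}\bigl[x_1(1-x_1)\cdot(\text{weight})\cdot f\bigr]$ against the remaining variables makes the boundary terms vanish for $c,d$ in a suitable range, and produces linear relations among averages of symmetric functions of one lower degree. Feeding in the Pieri rule for $\mathrm{P}_\lambda^{(2/\beta)}$ (multiplication by $p_1$ and its transition coefficients) converts these relations into a recursion expressing $S_\lambda$ through $S_\mu$ with $|\mu|<|\lambda|$. One then checks, using only the functional equation $\Gamma(z+1)=z\Gamma(z)$, that the explicit Gamma quotient on the right satisfies the identical recursion with the same initial data, which pins down $S_\lambda$ uniquely by induction on $|\lambda|$.

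\textbf{The obstacle.} The delicate point is the interaction factors $\prod_{i<j}\Gamma\bigl(\lambda_i-\lambda_j+\tfrac\beta2(j-i+1)\bigr)/\Gamma\bigl(\lambda_i-\lambda_j+\tfrac\beta2(j-i)\bigr)$: because $\mathrm{P}_\lambda^{(2/\beta)}$ is not multiplicative and the Vandermonde weight $\prod_{i<j}|x_i-x_j|^\beta$ couples all the variables, the integral does not factor over the $x_i$, so the full $\lambda$-dependence of these coupled terms must be transported through the Pieri coefficients. This is precisely where the genuine combinatorics of Jack polynomials enters and is the crux of the argument.

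As an alternative I would consider a generating-function route: using the Cauchy identity $\sum_{\lambda}\mathrm{P}_\lambda^{(\alpha)}(\mathbf p)\,\mathrm{P}_\lambda^{(\alpha)}(\mathbf q)\big/\bigl\langle \mathrm{P}_\lambda^{(\alpha)},\mathrm{P}_\lambda^{(\alpha)}\bigr\rangle_\alpha=\prod_{i,k}(1-x_iy_k)^{-1/\alpha}$ with $\alpha=2/\beta$, integrate in the $x$-variables against the Jacobi weight to obtain a Jack-hypergeometric series in the $y$-variables, identify it with a known closed-form ${}_2F_1$-type Selberg evaluation, and extract the coefficient of $\mathrm{P}_\lambda^{(\alpha)}(\mathbf q)$. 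A third possibility is to expand $\mathrm{P}_\lambda^{(2/\beta)}$ in the basis of multivariate Jacobi (BC-type Jack) polynomials, which are orthogonal for the Jacobi weight, so that only the constant term survives the average and $S_\lambda$ becomes an explicit connection coefficient. In all three routes the hard work is concentrated in the same place—either the Pieri coupling, the closed form of the hypergeometric sum, or the connection coefficient—so I expect the recursion approach above to be the most transparent to carry out.
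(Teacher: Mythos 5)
The paper does not prove this statement at all: it is quoted directly from Kadell's 1997 paper (see also Macdonald's conjecture list), so there is no internal proof to compare against, and your proposal must be judged on its own. As it stands it is a research plan rather than a proof. The base case (the classical Selberg integral) and the reduction to the normalized average $S_\lambda$ are fine, but each of your three routes stops exactly where the actual content of Kadell's theorem lies, and you say so yourself (``this is precisely where the genuine combinatorics of Jack polynomials enters and is the crux of the argument''). Concretely, in the Aomoto--Pieri route you never exhibit the recursion: integrating $\partial_{x_1}\bigl[x_1(1-x_1)\cdot(\text{weight})\cdot f\bigr]$ produces linear relations among averages, and the Pieri expansion of $p_1\,\mathrm{P}_\mu^{(2/\beta)}$ involves all partitions $\lambda$ obtained from $\mu$ by adding one box, so a priori you get one relation per $\mu$ but several unknowns $S_\lambda$ per relation; you neither show that this system closes up and uniquely determines the $S_\lambda$, nor verify that the explicit Gamma quotient satisfies the same relations. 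Without those two verifications the induction has no content.

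The two alternative routes have a circularity problem on top of being incomplete. The ``known closed-form ${}_2F_1$-type Selberg evaluation'' of the Jack-hypergeometric series obtained by integrating the Cauchy kernel against the Jacobi weight is essentially equivalent to Kadell's formula itself (it is the Kaneko/Yan evaluation, whose proof is of comparable depth to Kadell's), and the connection coefficients between Jack polynomials and the $BC$-type multivariate Jacobi polynomials orthogonal for this weight likewise encode exactly the information being sought. Invoking either as a black box is not a proof, and deriving either from scratch is where all the work is. To make your first route into an argument you would need at minimum (i) the explicit Pieri coefficients for $\mathrm{P}^{(\alpha)}$, (ii) the precise Aomoto-type identities for the Jacobi-$\beta$ weight with a Jack insertion, and (iii) uniqueness of the solution of the resulting recursion together with the check that the right-hand side solves it; none of these is supplied. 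Since the paper itself simply cites Kadell, the honest options are to do the same or to carry one of these routes through in full.
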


When $\lambda$ is the empty partition, i.e.,~$\lambda=(0,0,\dots)$, we obtain the value for $\mathcal Z_\beta^{\sf J}$ claimed in~\eqref{eq:Z}.

To consider \emph{negative} correlators, a further property is needed. Let us introduce the notation
\begin{equation*}
\underline x^{-1}=\big(x_1^{-1},\dots,x_n^{-1}\big) .
\end{equation*}

\begin{Lemma}
\label{lemma:inverse}
Let $n\geq 0$ be an integer and let $\lambda\in\mathcal P$ be such that $\ell(\lambda)\leq n$.
Define another partition\footnote{Geometrically, the complement of the diagram of $\lambda$ in the rectangle $\lbrace (i,j)\in\mathbb{Z}^2\colon 1\leq i\leq n,\, 1\leq j\leq\lambda_1\rbrace$ is (up to a rotation) the diagram of $\widehat\lambda$.} $\widehat\lambda\in\mathcal P$ by $\widehat\lambda_i=\lambda_1-\lambda_{n-i+1}$.
Then
\begin{equation*}
\widetilde P_\lambda^{(\alpha)}\big(\underline x^{-1}\big) = (x_1\cdots x_n)^{-\lambda_1} \widetilde P_{\widehat\lambda}^{(\alpha)}(\underline x).
\end{equation*}
\end{Lemma}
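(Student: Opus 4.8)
The plan is to reduce the identity to the corresponding statement for the polynomial variables $x_1,\dots,x_n$, and there to exploit the defining properties (orthogonality, triangularity, normalization) of the Jack functions together with the symmetry of the whole setup under inverting the variables. Concretely, I would work throughout in $\Lambda_n=\mathbb{C}[x_1,\dots,x_n]^{\mathfrak S_n}$ and use the restricted Jack polynomials $\widetilde{\mathrm{P}}_\lambda^{(\alpha)}$, noting that for $\ell(\lambda)\le n$ these are nonzero and are characterized by exactly the three bullet-point properties (orthogonality, dominance-triangularity in the monomial basis, unit leading coefficient) inherited from the inverse-limit description of $\Lambda$.

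First I would set $q(\underline x):=(x_1\cdots x_n)^{\lambda_1}\,\widetilde{\mathrm{P}}_\lambda^{(\alpha)}\big(\underline x^{-1}\big)$ and check that $q$ is a genuine symmetric \emph{polynomial}: since every monomial appearing in $\widetilde{\mathrm{P}}_\lambda^{(\alpha)}$ indexed by $\mu\preceq_{\sf d}\lambda$ has all exponents at most $\lambda_1$, multiplying $\widetilde{\mathrm{P}}_\lambda^{(\alpha)}\big(\underline x^{-1}\big)$ by $(x_1\cdots x_n)^{\lambda_1}$ clears all denominators. The central step is then to identify $q$ with $\widetilde{\mathrm{P}}_{\widehat\lambda}^{(\alpha)}$ by verifying the three characterizing properties. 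For the monomial expansion, I would track how the substitution $x_i\mapsto x_i^{-1}$ followed by multiplication by $(x_1\cdots x_n)^{\lambda_1}$ acts on a monomial symmetric polynomial $m_\mu$ in $n$ variables: it sends the exponent vector (a rearrangement of $(\mu_1,\dots,\mu_n)$, padded by zeros to length $n$) to its complement $(\lambda_1-\mu_n,\dots,\lambda_1-\mu_1)$, i.e.\ to $m_{\widehat\mu}$ where $\widehat\mu_i=\lambda_1-\mu_{n-i+1}$. The key combinatorial fact to establish is that $\mu\mapsto\widehat\mu$ is an order-\emph{reversing} bijection on the set of partitions (of length $\le n$ and first part $\le\lambda_1$) with respect to dominance, and that $\widehat{\widehat\mu}=\mu$; granting this, the triangularity $v^{(\alpha)}_{\lambda\mu}\ne0\Rightarrow\mu\preceq_{\sf d}\lambda$ transforms into $\widehat\mu\preceq_{\sf d}\widehat\lambda$ is needed for the nonvanishing coefficients of $q$, which is exactly the dominance-triangularity for $\widehat\lambda$, and the leading coefficient $v^{(\alpha)}_{\lambda\lambda}=1$ becomes the leading coefficient of $q$ in direction $\widehat\lambda$, also equal to $1$.

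For orthogonality I would argue that the map $\iota\colon f(\underline x)\mapsto (x_1\cdots x_n)^{\lambda_1} f\big(\underline x^{-1}\big)$, restricted to the graded piece of appropriate degree, is a linear isometry (or at least an isometry up to a harmless scalar, which is absorbed by the normalization) for the scalar product $\langle\,,\,\rangle_\alpha$ realized on $n$ variables; since the $\widetilde{\mathrm{P}}_\mu^{(\alpha)}$ for distinct $\mu$ are mutually orthogonal, their images are mutually orthogonal, so $q=\iota\big(\widetilde{\mathrm{P}}_\lambda^{(\alpha)}\big)$ is orthogonal to every $\widetilde{\mathrm{P}}_\nu^{(\alpha)}$ with $\nu\ne\widehat\lambda$. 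Together with the triangularity and normalization just established, the characterization forces $q=\widetilde{\mathrm{P}}_{\widehat\lambda}^{(\alpha)}$, which is the claim after dividing by $(x_1\cdots x_n)^{\lambda_1}$.

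I expect the main obstacle to be the precise handling of the scalar product under the inversion map. The deformed Hall product is defined intrinsically on $\Lambda$ via the power sums, so one must either transport it faithfully to a bilinear form on the relevant finite-dimensional space of symmetric polynomials in $n$ variables, or else phrase the entire argument back in $\Lambda$ where no inversion of variables is directly available. The cleanest route is probably to avoid orthogonality altogether and instead prove the identity purely through the monomial-triangularity characterization: establish that $\iota$ sends the triangular basis $\{\widetilde{\mathrm{P}}_\mu^{(\alpha)}\}_{\mu\preceq_{\sf d}\lambda}$ to a family satisfying the \emph{same} unitriangular pattern relative to $\{m_{\widehat\mu}\}$ (now upper-triangular because $\widehat{\,\cdot\,}$ reverses dominance), and then appeal to uniqueness of the Jack basis. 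Verifying that complementation genuinely reverses the dominance order on the truncated set of partitions, and that no spurious partitions of the wrong length or first part appear after the substitution, is the delicate point that must be checked carefully.
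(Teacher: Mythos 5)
Your overall strategy --- show that $q(\underline x)=(x_1\cdots x_n)^{\lambda_1}\widetilde{\mathrm{P}}_\lambda^{(\alpha)}\bigl(\underline x^{-1}\bigr)$ is a symmetric polynomial satisfying the characterizing properties of $\widetilde{\mathrm{P}}_{\widehat\lambda}^{(\alpha)}$ --- is the same as the paper's, and your treatment of the monomial side (polynomiality of $q$, the action of inversion-plus-multiplication on $m_\mu$, the leading coefficient) is sound. But the proposal has two genuine problems, one of which is fatal to your preferred fallback route.

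First, the fallback you call ``the cleanest route'' does not work: triangularity with respect to the monomial basis together with the normalization $v_{\lambda\lambda}^{(\alpha)}=1$ does \emph{not} determine the Jack polynomial. The set of symmetric polynomials of the form $m_{\widehat\lambda}+\sum_{\mu\prec_{\sf d}\widehat\lambda}c_\mu m_\mu$ is an affine space of positive dimension as soon as $\widehat\lambda$ is not minimal for dominance, so ``appealing to uniqueness of the Jack basis'' after establishing only a unitriangular pattern proves nothing. A third input is indispensable. The paper supplies it by replacing orthogonality with Stanley's characterization of $\widetilde{\mathrm{P}}_\lambda^{(\alpha)}$ as the triangular, normalized eigenfunctions of the Calogero--Sutherland operator $\mathscr H^{(\alpha)}=\frac\alpha2\sum_i D_{x_i}^2+\frac12\sum_{i<j}\frac{x_i+x_j}{x_i-x_j}(D_{x_i}-D_{x_j})$: a direct computation shows $\mathscr H^{(\alpha)}q=\bigl(\mathcal E_\lambda^{(\alpha)}+\frac\alpha2 n\lambda_1^2-\alpha\lambda_1|\lambda|\bigr)q$ and that this eigenvalue equals $\mathcal E_{\widehat\lambda}^{(\alpha)}$, which closes the argument. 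Your primary route via orthogonality could also be made to work, but you leave its crux unresolved, as you yourself acknowledge: the deformed Hall product defined through power sums does not transport transparently under $x_i\mapsto x_i^{-1}$ (power sums of inverted variables are not polynomials). The fix is to use Macdonald's torus (constant-term) inner product on $n$ variables, for which the $\widetilde{\mathrm{P}}_\mu^{(\alpha)}$ with $\ell(\mu)\le n$ are orthogonal and which is manifestly invariant both under $x_i\mapsto x_i^{-1}$ and under multiplication by the unimodular factor $(x_1\cdots x_n)^{\lambda_1}$; without naming such a realization the isometry claim is an assertion, not a proof.

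Second, your ``key combinatorial fact'' is stated backwards: box complementation $\mu\mapsto\widehat\mu$ is order-\emph{preserving} for dominance, not order-reversing. Indeed $\sum_{i=1}^r\widehat\mu_i=r\lambda_1-|\mu|+\sum_{i=1}^{n-r}\mu_i$, so for $|\mu|=|\nu|$ one has $\widehat\mu\preceq_{\sf d}\widehat\nu$ if and only if $\mu\preceq_{\sf d}\nu$. The conclusion you actually use in the second paragraph ($\mu\preceq_{\sf d}\lambda\Rightarrow\widehat\mu\preceq_{\sf d}\widehat\lambda$, i.e., lower-triangularity of $q$ relative to $\widehat\lambda$) is the correct one, but it contradicts your stated ``order-reversing'' claim, and the ``upper-triangular pattern'' invoked in your final paragraph is wrong. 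This slip is repairable, but the missing third characterizing condition is not repairable within the route you propose to rely on.
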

\begin{proof}
It is known~\cite[Theorem~3.1]{Stanley} that $\widetilde P_\lambda^{(\alpha)}(\underline x)$ are uniquely characterized as the eigenvectors of the {\it Calogero--Sutherland operator}
\begin{equation*}
\mathscr H^{(\alpha)} = \frac \alpha 2 \sum_{i=1}^n D_{x_i}^2 + \frac 12 \sum_{1\leq i<j\leq n}\frac{x_i+x_j}{x_i-x_j} (D_{x_i}-D_{x_j}) ,\qquad D_{z} = z \frac {\partial}{\partial z},
\end{equation*}
also satisfying the triangularity and normalization properties mentioned above.
Namely, they are uniquely characterized by
\begin{equation*}
\mathscr H^{(\alpha)}\ \widetilde{\mathrm{P}}_\lambda^{(\alpha)}(\underline x) = \mathcal E_\lambda^{(\alpha)} \widetilde{\mathrm{P}}_\lambda^{(\alpha)}(\underline x) ,\qquad
\mathcal E_\lambda^{(\alpha)} = \sum_{i=1}^n\biggl(\frac \alpha 2\lambda_i^2+\frac{n+1-2i}{2}\biggr) ,
\end{equation*}
and the fact that $\widetilde{\mathrm{P}}_\lambda^{(\alpha)}(\underline x)$ is a linear combination of monomial symmetric functions of $x_1,\dots, x_n$ associated with partitions smaller than $\lambda$ in the dominance relation $\preceq_{\sf d}$.

It is clear that $\widetilde{\mathrm{P}}_\lambda^{(\alpha)}\big(\underline x^{-1}\big)(x_1\cdots x_n)^{\lambda_1}$ is a symmetric polynomial of $x_1,\dots,x_n$ satisfying the same triangularity and normalization conditions as $\widetilde P_{\widehat \lambda}^{(\alpha)}(\underline x)$.
Moreover, a direct computation shows that
\begin{equation*}
\mathscr H^{(\alpha)}\big(\widetilde{\mathrm{P}}_\lambda^{(\alpha)}\big(\underline x^{-1}\big)(x_1\cdots x_n)^{\lambda_1}\big) = \biggl(\mathcal E_{\lambda}^{(\alpha)}+\frac \alpha2 n\lambda_1^2-\alpha\lambda_1|\lambda|\biggr) \widetilde{\mathrm{P}}_\lambda^{(\alpha)}\big(\underline x^{-1}\big)(x_1\cdots x_n)^{\lambda_1} ,
\end{equation*}
and it is straightforward to check that $\mathcal E_{\lambda}^{(\alpha)}+\frac \alpha2 n\lambda_1^2-\alpha\lambda_1|\lambda|=\mathcal E^{(\alpha)}_{\widehat \lambda}$.
Hence, we conclude that $\widetilde{\mathrm{P}}_\lambda^{(\alpha)}\bigl(\underline x^{-1}\bigr)(x_1\cdots x_n)^{\lambda_1}=\widetilde{\mathrm P}_{\widehat \lambda}^{(\alpha)}(\underline x)$.
\end{proof}

\begin{Corollary}
\label{cor:inverseSelberg}
For all $\beta,d>0$, $\lambda\in\mathcal P $, and $c>\tfrac 2\beta\lambda_1$, we have
\begin{align*}
&\frac 1{n!}\int_{(0,1)^n}\widetilde{\mathrm{P}}_\lambda^{(2/\beta)}\big(\underline x^{-1}\big) \prod_{1\leq i\leq n} \bigl(x_i^{\frac\beta 2 c-1} (1-x_i)^{\frac\beta 2 d-1}\bigr) \prod_{1\leq i<j\leq n}|x_i-x_j|^\beta \, \mathrm{d} x_1\cdots \mathrm{d} x_n
\\
& \qquad =
\prod_{1\leq i<j\leq n}\frac{\Gamma\bigl(\lambda_i-\lambda_j+\tfrac\beta 2(j-i+1)\bigr)}{\Gamma\bigl(\lambda_i-\lambda_j+\tfrac\beta 2(j-i)\bigr)}
\prod_{1\leq i\leq n}\frac{\Gamma\bigl(-\lambda_i+\tfrac \beta2(c+i-1)\bigr) \Gamma\bigl(\tfrac \beta2(d+n-i)\bigr)}{\Gamma\bigl(-\lambda_i+\tfrac \beta2(c+d+n+i-2)\bigr)} .
\end{align*}
\end{Corollary}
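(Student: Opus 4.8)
The plan is to reduce the integral to the one already evaluated in Theorem~\ref{thm:ASintegral}, using Lemma~\ref{lemma:inverse} to trade the inverse variables $\underline x^{-1}$ for ordinary ones at the cost of passing to the complementary partition $\widehat\lambda$. First I would apply Lemma~\ref{lemma:inverse} (legitimate since $\ell(\lambda)\le n$) to write
\[
\widetilde{\mathrm P}_\lambda^{(2/\beta)}\bigl(\underline x^{-1}\bigr) = (x_1\cdots x_n)^{-\lambda_1}\,\widetilde{\mathrm P}_{\widehat\lambda}^{(2/\beta)}(\underline x),\qquad \widehat\lambda_i=\lambda_1-\lambda_{n-i+1}.
\]
The monomial factor $(x_1\cdots x_n)^{-\lambda_1}$ can be absorbed into the weight: since $x_i^{\frac\beta2 c-1}\,x_i^{-\lambda_1}=x_i^{\frac\beta2 c'-1}$ with $c'=c-\tfrac2\beta\lambda_1$, the integrand becomes $\widetilde{\mathrm P}_{\widehat\lambda}^{(2/\beta)}(\underline x)$ integrated against the Jacobi weight with parameters $(c',d)$. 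The hypothesis $c>\tfrac2\beta\lambda_1$ is exactly what guarantees $c'>0$, so that Theorem~\ref{thm:ASintegral} applies verbatim.

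Next I would invoke Theorem~\ref{thm:ASintegral} with $\lambda$ replaced by $\widehat\lambda$ and $c$ replaced by $c'$, producing a ratio of Gamma functions indexed by $\widehat\lambda$ and $c'$; it then remains only to rewrite everything in terms of $\lambda$ and $c$ via the index reversal $i\mapsto n-i+1$. Using $\widehat\lambda_i-\widehat\lambda_j=\lambda_{n-j+1}-\lambda_{n-i+1}$ and setting $(i',j')=(n-j+1,\,n-i+1)$ one sees that $i<j\iff i'<j'$ and $j-i=j'-i'$, so the double product $\prod_{i<j}\frac{\Gamma(\widehat\lambda_i-\widehat\lambda_j+\frac\beta2(j-i+1))}{\Gamma(\widehat\lambda_i-\widehat\lambda_j+\frac\beta2(j-i))}$ is mapped onto the identical product for $\lambda$ appearing in the statement. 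For the single products, the same substitution together with $\tfrac\beta2 c'=\tfrac\beta2 c-\lambda_1$ gives
\[
\widehat\lambda_i+\tfrac\beta2(c'+n-i)=-\lambda_{n-i+1}+\tfrac\beta2(c+n-i),\qquad
\widehat\lambda_i+\tfrac\beta2(c'+d+2n-i-1)=-\lambda_{n-i+1}+\tfrac\beta2(c+d+2n-i-1),
\]
and relabelling $i\mapsto n-i+1$ converts these into $-\lambda_i+\tfrac\beta2(c+i-1)$ and $-\lambda_i+\tfrac\beta2(c+d+n+i-2)$ respectively, while the factor $\Gamma\bigl(\tfrac\beta2(d+n-i)\bigr)$ is merely permuted among the indices. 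Substituting these identities yields precisely the right-hand side claimed in the Corollary.

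The main obstacle is purely bookkeeping: one must carry out the reversal $i\mapsto n-i+1$ consistently across all three Gamma products and verify that the parameter shift $c\mapsto c'$ combines with $\widehat\lambda_i=\lambda_1-\lambda_{n-i+1}$ so as to cancel all explicit dependence on $\lambda_1$. There is no analytic subtlety beyond checking $c'>0$ (which is the stated hypothesis); the remainder is an exact rearrangement of Gamma-function arguments.
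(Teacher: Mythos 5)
Your proposal is correct and follows exactly the paper's argument: apply Lemma~\ref{lemma:inverse}, absorb $(x_1\cdots x_n)^{-\lambda_1}$ into the weight by shifting $c\mapsto c-\tfrac2\beta\lambda_1$, and invoke Theorem~\ref{thm:ASintegral} with $\widehat\lambda$ in place of $\lambda$. The paper leaves the index reversal $i\mapsto n-i+1$ implicit, whereas you carry it out explicitly (and correctly), so your write-up is if anything slightly more detailed than the original.
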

\begin{proof}
According to Lemma~\ref{lemma:inverse}, we have
\begin{align*}
&\int_{(0,1)^n}\widetilde{\mathrm{P}}_\lambda^{(2/\beta)}\big(\underline x^{-1}\big) \prod_{1\leq i\leq n} \bigl(x_i^{\frac\beta 2 c-1} (1-x_i)^{\frac\beta 2 d-1}\bigr) \prod_{1\leq i<j\leq n}|x_i-x_j|^\beta \, \mathrm{d} x_1\cdots\mathrm{d} x_n
\\
& \qquad=\int_{(0,1)^n}\widetilde{\mathrm{P}}_{\widehat \lambda}^{(2/\beta)}(\underline x) \prod_{1\leq i\leq n} \bigl(x_i^{\frac\beta 2 (c-\frac 2\beta\lambda_1)-1} (1-x_i)^{\frac\beta 2 d-1}\bigr) \prod_{1\leq i<j\leq n}|x_i-x_j|^\beta \, \mathrm{d} x_1\cdots\mathrm{d} x_n
\end{align*}
such that it suffices to apply Theorem~\ref{thm:ASintegral} after replacing $c$ by $c-\frac 2\beta\lambda_1$ and $\lambda$ by $\widehat\lambda$.
\end{proof}

We consider an infinite sequence of variables $\mathbf t=(t_1,t_2,\dots)$ and denote $t_\lambda=t_{\lambda_1}\cdots t_{\lambda_{\ell(\lambda)}}$ for all $\lambda\in\mathcal P $.
Let us introduce the following {\it formal} generating functions of correlators
\begin{align}
\nonumber
\Upsilon_\beta^\pm(\mathbf t;n)& = \sum_{\lambda\in\mathcal P }\mathcal C^{\sf J}_{\pm\lambda_1,\dots,\pm\lambda_{\ell(\lambda)}}(n,\beta,c,d) \biggl(\frac\beta 2\biggr)^{\ell(\lambda)} \frac{t_\lambda}{\mathsf{z}_\lambda}
\\
\label{eq:genfun}
& = \int_{(0,1)^n}\exp\biggl(\frac\beta 2\sum_{k\geq 1}\frac{t_k}{k}\big(x_1^{\pm k}+\dots+x_n^{\pm k}\big)\biggr) w^{\sf J}_\beta(\underline x;c,d)\,\mathrm{d} x_1\cdots\mathrm{d} x_n ,
\end{align}
the last equality stemming from the well-known identity (cf.~\cite[Chapter~I.2]{Macdonald})
\begin{equation*}
\exp\biggl(\sum_{k\geq 1}\frac{t_k}{k}\big(x_1^k+\dots+x_n^k\big)\biggr) = \sum_{\lambda\in\mathcal P }\frac{t_\lambda}{\mathsf{z}_\lambda}\Biggl(\sum_{i=1}^nx_i^{\lambda_1}\Biggr)\cdots \Biggl(\sum_{i=1}^nx_i^{\lambda_{\ell(\lambda)}}\Biggr) .
\end{equation*}
Let us also recall the Cauchy identity for Jack polynomials~\cite[formula~(6)]{Stanley}
\begin{equation}
\label{eq:Cauchy}
\exp\biggl(\frac 1\alpha \sum_{k\geq 1}\frac {t_k}k \big(x_1^k+\dots+x_n^k\big)\biggr) = \sum_{\lambda\in\mathcal P ,\ \ell(\lambda)\leq n}\frac{\mathsf{h}_\alpha(\lambda)}{\mathsf{h}'_\alpha(\lambda)} \mathrm{P}_\lambda^{(\alpha)}(\mathbf t) \widetilde{\mathrm{P}}_\lambda^{(\alpha)}(x_1,\dots,x_n) .
\end{equation}
\begin{Lemma}
\label{Lemma}
We have
\begin{equation}
\label{eq:holds}
\Upsilon_\beta^\pm(\mathbf t;n) = \sum_{\lambda\in\mathcal P ,\ \ell(\lambda)\leq n}\frac 1{\mathsf{h}_{2/\beta}'(\lambda)} f_\lambda^\pm \mathrm{P}^{(2/\beta)}_\lambda(\mathbf t) ,
\end{equation}
with coefficients
\begin{align*}
&f_\lambda^+ = \prod_{\square\in\mathsf{D}(\lambda)}\frac{ (n+\mathsf{c}_{2/\beta}(\square) ) (c-1+n+\mathsf{c}_{2/\beta}(\square) )}{c+d-2+2n+\mathsf{c}_{2/\beta}(\square)} ,
\\
&f_\lambda^- = \prod_{\square\in\mathsf{D}(\lambda)}\frac{}{}\frac{ (n+\mathsf{c}_{2/\beta}(\square) ) (c+d+n-1-\tfrac 2\beta-\mathsf{c}_{2/\beta}(\square) )}{c-\tfrac 2\beta-\mathsf{c}_{2/\beta}(\square)} .
\end{align*}
\end{Lemma}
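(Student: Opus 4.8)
The plan is to combine the Cauchy identity for Jack polynomials \eqref{eq:Cauchy} with the integration formulas of Theorem~\ref{thm:ASintegral} and Corollary~\ref{cor:inverseSelberg}. The key observation is that the exponential appearing in the second line of \eqref{eq:genfun} is exactly the left-hand side of the Cauchy identity \eqref{eq:Cauchy} with $\alpha=2/\beta$ (so that $1/\alpha=\beta/2$), evaluated at $\underline x$ for the $+$ case and at $\underline x^{-1}$ for the $-$ case. First I would substitute this Jack expansion into \eqref{eq:genfun}, interchange the (formal) sum over $\lambda$ with the integral, and thereby reduce the computation of $\Upsilon_\beta^\pm$ to computing, for each $\lambda$ with $\ell(\lambda)\le n$, the integral of $\widetilde{\mathrm P}_\lambda^{(2/\beta)}(\underline x)$ (resp.\ $\widetilde{\mathrm P}_\lambda^{(2/\beta)}(\underline x^{-1})$) against the Jacobi weight. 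These are precisely the quantities furnished by Theorem~\ref{thm:ASintegral} and Corollary~\ref{cor:inverseSelberg}, up to the normalization $\mathcal Z^{\sf J}_\beta$ from \eqref{eq:Z}.

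**Main computation.** After this substitution, each coefficient of $\mathrm P_\lambda^{(2/\beta)}(\mathbf t)$ in $\Upsilon_\beta^\pm$ is $\tfrac{\mathsf h_{2/\beta}(\lambda)}{\mathsf h_{2/\beta}'(\lambda)}$ times the ratio of the Kadell integral to $\mathcal Z^{\sf J}_\beta$. The crux is to simplify this ratio into the stated product over boxes $\square\in\mathsf D(\lambda)$. I would divide the right-hand side of Theorem~\ref{thm:ASintegral} by \eqref{eq:Z} term by term. The double products $\prod_{i<j}\Gamma(\lambda_i-\lambda_j+\tfrac\beta2(j-i+1))/\Gamma(\cdots)$ over $\mathcal Z^{\sf J}_\beta$'s analogous factor, together with the factor $\mathsf h_{2/\beta}(\lambda)$, should collapse: this is the standard mechanism by which Gamma-function ratios indexed by pairs $(i,j)$ reorganize into a single product over the boxes of the diagram, using that $\tfrac\beta2=1/\alpha$ with $\alpha=2/\beta$ and that the content $\mathsf c_{2/\beta}(\square)=\tfrac2\beta(j-1)-(i-1)$ encodes exactly the shift $\lambda_i+\tfrac\beta2(n-i)$ appearing in the single products. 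Concretely, for $f_\lambda^+$ each of the three factors $(n+\mathsf c)$, $(c-1+n+\mathsf c)$, $(c+d-2+2n+\mathsf c)$ arises from telescoping the ratio of the single-product Gamma factors $\Gamma(\lambda_i+\tfrac\beta2(c+n-i))$, $\Gamma(\lambda_i+\tfrac\beta2(c+d+2n-i-1))$ against their $\lambda=0$ counterparts via the functional equation $\Gamma(z+m)/\Gamma(z)=\prod_{k=0}^{m-1}(z+k)$. The $f_\lambda^-$ case is identical but starts from Corollary~\ref{cor:inverseSelberg}, so the role of $c$ is played by $c-\tfrac2\beta\lambda_1$ and one reads off the shifted arguments accordingly.

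**Main obstacle.** The hard part will be the bookkeeping that converts the products indexed by rows $i$ (and pairs $i<j$) into products indexed by boxes $\square=(i,j)$, in particular matching the cancellation of the Vandermonde-type Gamma ratios against $\mathsf h_{2/\beta}(\lambda)$ and verifying that the leftover factors assemble into the content-shifted linear expressions in $\mathsf c_{2/\beta}(\square)$ with no spurious dependence on $\lambda_1$ beyond what the hook-content formalism predicts. This is a well-known but delicate rearrangement; I would organize it by first treating the $\beta$-Vandermonde part (proving it equals $\mathsf h'_{2/\beta}(\lambda)/\mathsf h_{2/\beta}(\lambda)$ times a product over boxes, matching \cite[Chapter~VI]{Macdonald}) and then handling the three endpoint contributions separately. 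A minor technical point is to confirm that the interchange of summation and integration is legitimate at the formal-power-series level in $\mathbf t$, which holds because for each fixed total degree only finitely many $\lambda$ contribute and each individual integral is the finite quantity given by the Selberg--Kadell evaluation.
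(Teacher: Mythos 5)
Your proposal follows the paper's proof essentially verbatim: expand the exponential in \eqref{eq:genfun} via the Cauchy identity \eqref{eq:Cauchy} with $\alpha=2/\beta$, integrate term by term using Theorem~\ref{thm:ASintegral} (resp.\ Corollary~\ref{cor:inverseSelberg} for the $-$ case), normalize by \eqref{eq:Z}, and reorganize the resulting Gamma ratios into box products via the Pochhammer functional equation. The one identity you flag as the main obstacle---the collapse of the double product $\prod_{i<j}$ against the prefactor $\mathsf{h}_{2/\beta}(\lambda)$ into $\prod_{\square\in\mathsf{D}(\lambda)}(n+\mathsf{c}_{2/\beta}(\square))/\mathsf{h}_{2/\beta}(\lambda)$---is exactly \eqref{eq:again}, which the paper isolates and proves by induction on $\ell(\lambda)$ in Appendix~\ref{appendix}.
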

\begin{proof}
By Theorem~\ref{thm:ASintegral} and~\eqref{eq:Cauchy}, we infer that~\eqref{eq:holds} holds, in the $+$~case, with
\begin{gather*}
f_\lambda^+ =
\frac{n! \mathsf{h}_{2/\beta}(\lambda)}{\mathcal Z_\beta^{\sf J}}\prod_{1\leq i<j\leq n}\frac{\Gamma\bigl(\lambda_i-\lambda_j+\tfrac\beta 2(j-i+1)\bigr)}{\Gamma\bigl(\lambda_i-\lambda_j+\tfrac\beta 2(j-i)\bigr)} \\
\hphantom{f_\lambda^+ =}{}
\times \prod_{i=1}^n\frac{\Gamma\bigl(\lambda_i+\tfrac \beta2(c+n-i)\bigr) \Gamma\bigl(\tfrac \beta2(d+n-i)\bigr)}{\Gamma\bigl(\lambda_i+\tfrac \beta2(c+d+2n-i-1)\bigr)} .
\end{gather*}
Recall the Pochhammer symbol
\begin{equation*}
(z)_m = \frac{\Gamma(m+z)}{\Gamma(z)} = \prod_{j=1}^m(z+j-1) ,
\end{equation*}
for nonnegative integer $m$.
We use~\eqref{eq:Z} to rewrite
\begin{equation*}
f_\lambda^+ = \mathsf{h}_{2/\beta}(\lambda) \prod_{1\leq i<j\leq n}\frac{\bigl(\tfrac\beta 2(j-i+1)\bigr)_{\lambda_i-\lambda_j}}{\bigl(\tfrac\beta 2(j-i)\bigr)_{\lambda_i-\lambda_j}} \prod_{1\leq i\leq n}\frac{\bigl(\tfrac \beta2(c+n-i)\bigr)_{\lambda_i}}{\bigl(\tfrac \beta2(c+d+2n-i-1)\bigr)_{\lambda_i}} .
\end{equation*}
We reason separately for the two products. The second one is
\begin{align}
\nonumber
\prod_{1\leq i\leq n}\frac{\bigl(\tfrac \beta2(c+n-i)\bigr)_{\lambda_i}}{\bigl(\tfrac \beta2(c+d+2n-i-1)\bigr)_{\lambda_i}}
& = \prod_{1\leq i\leq \ell(\lambda)}\prod_{1\leq j\leq \lambda_i}\frac{\tfrac \beta2(c+n-i)+j-1}{\tfrac \beta2(c+d+2n-i-1)+j-1}
\\
\label{eq:similarly}
& = \prod_{\square\in\mathsf{D}(\lambda)}\frac{c-1+n+\mathsf{c}_{2/\beta}(\square)}{c+d-2+2n+\mathsf{c}_{2/\beta}(\square)} .
\end{align}
About the first one instead, we claim that, due to many cancellations, we have the identity\footnote{Cf.~\cite[Theorem~5.17.1]{Etingof} for the well-known case $\beta=2$, where $\dim\lambda$ is computed in terms of the hook-length formula from the Frobenius determinant formula.}
\begin{equation}
\label{eq:again}
\prod_{1\leq i<j\leq n}\frac{\bigl(\tfrac\beta 2(j-i+1)\bigr)_{\lambda_i-\lambda_j}}{\bigl(\tfrac\beta 2(j-i)\bigr)_{\lambda_i-\lambda_j}}
 = \frac {\prod_{\square\in\mathsf{D}(\lambda)}(n+\mathsf{c}_{2/\beta}(\square))}{\mathsf{h}_{2/\beta}(\lambda)} ,
\end{equation}
which would complete the proof for $f^+_\lambda$.
In the interest of clarity, this claim is proved below, see Appendix~\ref{appendix}.

Similarly, thanks to Corollary~\ref{cor:inverseSelberg} and~\eqref{eq:Cauchy} we infer that~\eqref{eq:holds} holds, in the $-$~case, with $f^-_\lambda$ equal to
\begin{align*}
&\frac{n! \mathsf{h}_{2/\beta}(\lambda)}{\mathcal Z_\beta^{\sf J}}\prod_{1\leq i<j\leq n}\frac{\Gamma\bigl(\lambda_i-\lambda_j+\tfrac\beta 2(j-i+1)\bigr)}{\Gamma\bigl(\lambda_i-\lambda_j+\tfrac\beta 2(j-i)\bigr)}\prod_{i=1}^n\frac{\Gamma\bigl(-\lambda_i+\tfrac \beta2(c+i-1)\bigr) \Gamma\bigl(\tfrac \beta2(d+n-i)\bigr)}{\Gamma\bigl(-\lambda_i+\tfrac \beta2(c+d+n+i-2)\bigr)}
\\
&\qquad = \mathsf{h}_{2/\beta}(\lambda) \prod_{1\leq i<j\leq n}\frac{\bigl(\tfrac\beta 2(j-i+1)\bigr)_{\lambda_i-\lambda_j}}{\bigl(\tfrac\beta 2(j-i)\bigr)_{\lambda_i-\lambda_j}} \prod_{1\leq i\leq n}\frac{\bigl(\tfrac \beta2(c+d+n+i-2)-\lambda_i\bigr)_{\lambda_i}}{\bigl(\tfrac \beta2(c+i-1)-\lambda_i\bigr)_{\lambda_i}} .
\end{align*}
The first part of this expression is again~\eqref{eq:again}, whereas the last part can be rewritten similarly as~\eqref{eq:similarly}:
\begin{align*}
\prod_{1\leq i\leq n}\frac{\bigl(\tfrac \beta2(c+d+n+i-2)-\lambda_i\bigr)_{\lambda_i}}{\bigl(\tfrac \beta2(c+i-1)-\lambda_i\bigr)_{\lambda_i}}
& = \prod_{1\leq i\leq \ell(\lambda)}\prod_{1\leq j\leq \lambda_i}\frac{\tfrac \beta2(c+d+n+i-2)-j}{\tfrac \beta2(c+i-1)-j}
\\
& = \prod_{\square\in\mathsf{D}(\lambda)}\frac{c+d+n-1-\tfrac 2\beta-\mathsf{c}_{2/\beta}(\square)}{c-\tfrac 2\beta-\mathsf{c}_{2/\beta}(\square)},
\end{align*}
and the proof is complete.
\end{proof}

\begin{proof}[Proof of Theorem~\ref{thm}]
By Lemma~\ref{Lemma} and~\eqref{eq:Jackexp}, we obtain the following identities of generating functions:
\begin{align*}
&\Upsilon^+_\beta(\mathbf t;n)\big|_{c=n(\gamma-1)+1,\,d=n(\delta-1)+1} =
\tau_{G_+^{\sf J}}^{\frac 2\beta-1}\bigl(\tfrac 1n;\mathbf p\bigr)\Big|_{p_k=\bigl(\frac{\gamma n}{\gamma+\delta}\bigr)^kt_k} ,\\
&\Upsilon^-_\beta(\mathbf t;n)\big|_{c=n\gamma+\frac 2\beta,\,d=n(\delta-1)+1} =
\tau_{G_-^{\sf J}}^{\frac 2\beta-1}\bigl(\tfrac 1n;\mathbf p\bigr)\Big|_{p_k=\bigl(\frac{(\gamma+\delta) n}{\gamma}\bigr)^kt_k} ,
\end{align*}
where $G_\pm^{\sf J}$ are given in~\eqref{eq:G}.
\big(Note that the restriction $\ell(\lambda)\leq n$ in the sum in~\eqref{eq:holds} can be lifted because $f^\pm_\lambda$ are automatically zero when $\ell(\lambda)>n$.\big)
To complete the proof it suffices to compare the expansions~\eqref{eq:tau} and~\eqref{eq:genfun}.
\end{proof}

\section{Colored monotone Hurwitz maps}\label{sec:geometry}

In this section, we give a combinatorial model for the $b$-Hurwitz numbers of Definition~\ref{def:HN}, when $G$ is (the expansion of) a rational function, in terms of colored monotone Hurwitz maps, introduced in Section~\ref{sec:introgeo}.

\begin{Theorem}
\label{thm:HM}
Let $g_1,\dots,g_{L+M}$ be formal parameters and let
\begin{equation*}
\mathcal T(t,\mathbf p;b) = \sum_{\lambda\in\mathcal P } \frac{t^{|\lambda|}}{\mathsf{h}_{b+1}'(\lambda)} \mathrm{P}_\lambda^{(b+1)}(\mathbf p) \prod_{\square\in\mathsf{D}(\lambda)}G(\mathsf{c}_{b+1}(\square)) ,\qquad
G(z) = \frac{\prod_{i=1}^L(1+g_iz)}{\prod_{i=1}^M(1-g_{L+i}z)} .
\end{equation*}
We have
\begin{equation*}
\mathcal T(t,\mathbf p;b) = \sum_{\lambda\in\mathcal P }\frac{t^{|\lambda|}}{|\lambda|!} p_\lambda
\sum_{(\Gamma,c)}\frac{b^{\nu(\Gamma)}}{(1+b)^{|\pi_0(\Gamma)|}}\prod_{i=1}^{L+M}g_i^{|c^{-1}(i)|},
\end{equation*}
where $(\Gamma,c)$ runs in the set of $(L|M)$-colored monotone Hurwitz maps with profile $\lambda$.
\end{Theorem}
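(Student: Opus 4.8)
The plan is to prove the identity by the inductive construction of monotone Hurwitz maps employed in \cite[Proposition~3.2]{BonzomChapuyDolega}, now carrying along the coloring. Since both sides are graded by $t^{|\lambda|}$ and the left-hand side is written in the Jack basis, I would first reduce to matching, for each fixed profile $\lambda$, the coefficient of $t^{|\lambda|}p_\lambda$ obtained after expanding every Jack polynomial $\mathrm{P}_\mu^{(b+1)}(\mathbf p)$ in the power-sum basis. One builds a monotone Hurwitz map by introducing vertices $1,2,\dots,|\lambda|$ in turn and, when the vertex of current maximal label $k$ is introduced, attaching to it the edges whose larger endpoint is $k$; the admissible attachments are dictated by property~6 of the definition, which simultaneously fixes the active corners and the comparison of local orientations. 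On the algebraic side this iterative attachment is encoded by the content product $\prod_{\square}G(\mathsf{c}_{b+1}(\square))$: following \cite{BonzomChapuyDolega}, each added edge corresponds to one factor in a step-by-step expansion of this product, and the weight $1$ or $b$ recorded upon deleting that edge is read off from whether the corresponding transition is orientable (off-diagonal) or not (diagonal). For the single-color weight this is exactly \cite[Proposition~3.2]{BonzomChapuyDolega}, which supplies the $b$-weight $b^{\nu(\Gamma)}$ together with the measure-of-non-orientability bookkeeping; I would invoke it as the base of the argument.

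The new ingredient is the coloring, which I would obtain by factoring the rational weight generating function,
\[
\prod_{\square\in\mathsf{D}(\lambda)}G(\mathsf{c}_{b+1}(\square)) = \prod_{i=1}^{L}\prod_{\square}\bigl(1+g_i\,\mathsf{c}_{b+1}(\square)\bigr)\prod_{i=1}^{M}\prod_{\square}\bigl(1-g_{L+i}\,\mathsf{c}_{b+1}(\square)\bigr)^{-1},
\]
and matching the factors of $G$ with the colors. When the vertex of maximal current label $k$ is processed, a numerator factor $(1+g_iz)$ may attach at most one edge of color $i\le L$ to $k$, since its expansion is linear, whereas a denominator factor $(1-g_{L+i}z)^{-1}=\sum_{m\ge0}(g_{L+i}z)^m$ may attach arbitrarily many edges of color $L+i$ to $k$. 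Summing over all such choices across all vertices reproduces exactly the sum over $(L|M)$-colorings: the constraint $b_i<b_j$ for $c(i)=c(j)\le L$ expresses that edges sharing a numerator color have pairwise distinct top vertices (each numerator factor is used at most once per vertex), denominator colors being unconstrained beyond the global monotonicity $b_1\le\dots\le b_r$; and the constraint $c(i)\le c(j)$ whenever $b_i=b_j$ is the canonical ordering of the edges attached to a common top vertex, ensuring that each colored map is generated exactly once, with monomial weight $\prod_i g_i^{|c^{-1}(i)|}$. Since the coloring only records which factor of $G$ each edge came from and leaves the underlying map unchanged, the deletion procedure and hence the $b$-weight $b^{\nu(\Gamma)}$ coincide with those of the uncolored map.

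Finally I would reconcile the normalizations. After the combinatorial reading above, the prefactor $t^{|\lambda|}/\mathsf{h}'_{b+1}(\lambda)$ together with the power-sum expansion of the Jack polynomial produces a sum over maps on $|\lambda|$ labeled vertices, the passage to $\tfrac{1}{|\lambda|!}p_\lambda\sum_{(\Gamma,c)}(\cdots)$ contributing the factor $1/|\lambda|!$ that reflects the arbitrary labeling of the $|\lambda|$ vertices; the factor $(1+b)^{-|\pi_0(\Gamma)|}$ arises, as in the single-color case of \cite{BonzomChapuyDolega}, from the Jack-norm normalization $\langle \mathrm{P}_\lambda^{(b+1)},\mathrm{P}_\lambda^{(b+1)}\rangle_{b+1}=\mathsf{h}'_{b+1}(\lambda)/\mathsf{h}_{b+1}(\lambda)$, each connected component of $\Gamma$ contributing one independent factor of $(1+b)^{-1}$. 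The main obstacle is the second step: verifying that the expansion of the rational $G$ matches the $(L|M)$-coloring rules with no over- or undercounting. The delicate point is the interaction between the distinct-top-vertex condition forced by numerator factors, the free repetition allowed by denominator factors, and the tie-breaking convention $c(i)\le c(j)$ for equal $b_i$: one must check that, when several edges share a common top vertex, the admissible colorings are in exact bijection with the monomials produced by multiplying out the corresponding content factors. The underlying $b$-weight identity is the deep input, but it may be cited from \cite{BonzomChapuyDolega}, so that the genuine work is this colored bookkeeping.
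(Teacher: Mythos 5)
Your proposal follows essentially the same route as the paper: the vertex-by-vertex inductive construction of colored monotone Hurwitz maps, with the coloring obtained by factoring $G$ so that each numerator factor $(1+g_i\,\cdot)$ contributes at most one edge of its color per vertex and each denominator factor a geometric series of edges, and with the single-edge-addition operator and $b$-weight machinery cited from \cite[Proposition~3.2]{BonzomChapuyDolega}. The only notable difference is presentational: the paper bridges the Jack side to the combinatorial side not by direct coefficient extraction from the content product but via the operator $\Lambda$ and the evolution equation $\partial_t\mathcal T=G(\Lambda)\tfrac{z_1}{1+b}\mathcal T\big|_{z_i=p_i}$ of \cite[equation~(62)]{ChapuyDolega}, which is precisely the ``deep input'' you correctly identify and defer to the cited references.
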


\begin{proof}
The argument is similar to the one in the proof of~\cite[Proposition~3.2]{BonzomChapuyDolega}, to which we refer for more details.
In this proof we write ``CMHM'' in place of ``$(L|M)$-colored monotone Hurwitz map''.

The key property is that a CMHM with $n$ vertices can always be obtained, in a unique way, from a CMHM with $n-1$ vertices by first adding an extra isolated vertex of label $n$, then adding a number of edges incident to $n$ of each color $\lbrace 1,\dots,L+M\rbrace$; the second property in the definition of coloring is ensured by adding edges with increasing label and (weakly) increasing color, and the first property by adding $0$ or $1$ edges (and no more) of color $\leq L$.
We have to track this process at the level of generating functions.

Introduce the generating function, for $n\geq 0$,
\begin{equation*}
\mathcal T^{[n]}(\mathbf p;b) = \sum_{\substack{\lambda\in\mathcal P \\|\lambda|=n}}p_\lambda\sum_{(\Gamma,c)}\frac{b^{\nu(\Gamma)}}{(1+b)^{|\pi_0(\Gamma)|}}\prod_{i=1}^{L+M}g_i^{|c^{-1}(i)|},
\end{equation*}
where $(\Gamma,c)$ runs in the set of CMHM with profile~$\lambda$ (hence, with $n$ vertices).

We have
\begin{equation*}
\frac{z_1}{1+b}\mathcal T^{[n-1]}(\mathbf p;b) = \sum_{\substack{\lambda\in\mathcal P \\|\lambda|=n-1}}z_1 p_\lambda\sum_{(\Gamma,c)}\frac{b^{\nu(\Gamma)}}{(1+b)^{|\pi_0(\Gamma)|}}\prod_{i=1}^{L+M}g_i^{|c^{-1}(i)|},
\end{equation*}
where $(\Gamma,c)$ runs in the set of CMHM with $n$ vertices, where the $n$th vertex is isolated and whose profile is~$\lambda\sqcup \lbrace 1\rbrace$, the extra $1$ being the degree of the face containing the $n$th vertex.

We next attach either $0$ or $1$ edges with color $1$ incident to the $n$th vertex: the argument in~\cite[Proposition~3.2]{BonzomChapuyDolega} shows that\footnote{We have $y_{i}=z_{i+1}$ with respect to the notation in loc.\ cit.}
\begin{equation*}
(1+g_1\Lambda)\frac{z_1}{1+b}\mathcal T^{[n-1]}(\mathbf p;b) = \sum_{j\geq 1}\sum_{\substack{\lambda\in\mathcal P \\|\lambda|=n-j}}z_j p_\lambda\sum_{(\Gamma,c)}\frac{b^{\nu(\Gamma)}}{(1+b)^{|\pi_0(\Gamma)|}}\prod_{i=1}^{L+M}g_i^{|c^{-1}(i)|},
\end{equation*}
where now $(\Gamma,c)$ runs in the set of CMHM with $n$ vertices, where edges incident to the $n$th vertex only have color $1$, and whose profile is~$\lambda\sqcup \lbrace j\rbrace$, $j$ being the degree of the face containing the active corner at $n$.
Indeed, it is proven in loc. cit. that the action of adding an edge incident to the last vertex~$n$ to a Hurwitz map and counting the degree of the face containing the active corner at~$n$ with variables~$z_i$ instead of~$p_i$ corresponds to the action of the operator~$\Lambda$ given by
\begin{equation*}
\Lambda =
(1+b)\sum_{i,j\geq 1}i z_{i+j} \frac{\partial^2}{\partial p_i \partial z_j} +
\sum_{i,j\geq 1}z_i p_j \frac{\partial}{\partial z_{i+j}} +
b\sum_{i\geq 2}(i-1) z_i \frac{\partial}{\partial z_i} .
\end{equation*}

By iterating this argument $L$ times, we obtain
\begin{equation*}
(1+g_L\Lambda)\cdots(1+g_1\Lambda)\frac{z_1}{1+b}\mathcal T^{[n-1]}(\mathbf p;b) = \sum_{j\geq 1}\sum_{\substack{\lambda\in\mathcal P \\|\lambda|=n-j}}z_j p_\lambda\sum_{(\Gamma,c)}\frac{b^{\nu(\Gamma)}}{(1+b)^{|\pi_0(\Gamma)|}}\prod_{i=1}^{L+M}g_i^{|c^{-1}(i)|},
\end{equation*}
where now $(\Gamma,c)$ runs in the set of CMHM with $n$ vertices, where edges incident to the $n$th vertex have colors $\leq L$ only, and whose profile is~$\lambda\sqcup \lbrace j\rbrace$, $j$ being the degree of the face containing the active corner at $n$.

Next, we attach edges incident to the $n$th vertex of color $L+1$; by the same argument as before, taking into account that we can now add as many such edges as we wish, we get
\begin{align*}
&\sum_{p=0}^{+\infty}(g_{L+1}\Lambda)^p (1+g_L\Lambda) \cdots (1+g_1\Lambda) \frac{z_1}{1+b} \mathcal T^{[n-1]}(\mathbf p;b)
\\
& \qquad = \frac 1{1-g_{L+1}\Lambda} (1+g_L\Lambda) \cdots (1+g_1\Lambda) \frac{z_1}{1+b}\mathcal T^{[n-1]}(\mathbf p;b)
\\
&\qquad = \sum_{j\geq 1}\sum_{\substack{\lambda\in\mathcal P \\|\lambda|=n-j}}z_j p_\lambda\sum_{(\Gamma,c)}\frac{b^{\nu(\Gamma)}}{(1+b)^{|\pi_0(\Gamma)|}}\prod_{i=1}^{L+M}g_i^{|c^{-1}(i)|},
\end{align*}
where now $(\Gamma,c)$ runs in the set of CMHM with $n$ vertices, where edges incident to the $n$th vertex only have colors $\leq L+1$, and whose profile is~$\lambda\sqcup \lbrace j\rbrace$, $j$ being the degree of the face containing the active corner at $n$.

Finally, iterating this step $M$-times (and restoring the variables $p_i$ for the degree of the face containing active corner at the $n$th vertex) we get
\begin{equation*}
\mathcal T^{[n]}(\mathbf p;b) = G(\Lambda) \frac{z_1}{1+b} \mathcal T^{[n-1]}(\mathbf p;b) \bigg|_{z_i=p_i} .
\end{equation*}

On the other hand, we know from identity~\cite[equation~(62)]{ChapuyDolega} that
\begin{equation*}
\frac{\partial}{\partial t}\mathcal T(t,\mathbf p;b) = G(\Lambda) \frac{z_1}{1+b} \mathcal T(t,\mathbf p;b) \bigg|_{z_i=p_i} ,
\end{equation*}
and thus we can show (inductively in powers of $t$) that $\mathcal T(t,\mathbf p;b)=\sum_{n\geq 0}\frac {t^n}{n!}\mathcal T^{[n]}(\mathbf p;b)$.
\end{proof}

\begin{proof}[Proof of Theorem~\ref{thm:geo}]
It suffices to note that, by Definition~\ref{def:HN}, when $G(z)\!=\!\frac{\prod_{i=1}^L(1+u_iz)}{\prod_{i=1}^M(1-u_{L+i}z)}$, $H_G^b(\lambda;r)$ is the coefficient of $\epsilon^rp_\lambda$ in $\mathcal T(t,\mathbf p;b)\big|_{t=1,\, g_i=\epsilon u_i}$, where $\mathcal T(t,\mathbf p;b)$ is defined in Theo\-rem~\ref{thm:HM}.\looseness=1
\end{proof}

\subsection{Comparison with the orientable case}\label{sec:orientable}

To illustrate the reduction to the orientable case, we recall the group theoretical interpretation of the Hurwitz numbers $H_G^{b=0}(\lambda;r)$.

Let $\mathfrak S_n$ be the group of permutations of $\lbrace 1,\dots,n\rbrace$ and let, for all $\lambda\in\mathcal P $ with $|\lambda|=n$, $\mathfrak C(\lambda)\subseteq\mathfrak S_n$ be the conjugacy class of permutations whose disjoint cycles have lengths equal to the parts of $\lambda$.
Let $\mathbb{C}[\mathfrak S_n]$ be the group algebra of $\mathfrak S_n$, and $\mathscr J_i\in\mathbb{C}[\mathfrak S_n]$ for $i=1,\dots, n$, be the Young--Jucys--Murphy elements, namely
\begin{equation}
\label{eq:J}
\mathscr J_1=0,\qquad \mathscr J_2=(1,2),\qquad \dots,\qquad \mathscr J_n=(1,n)+(2,n)+\dots+(n-1,n) .
\end{equation}
(We denote $(a,b)\in\mathfrak S_n$ the {\it transposition} that exchanges $a$, $b$.)
Let us recall that although the elements $\mathscr J_i$ are not central, they form a (maximal) commutative subalgebra.
Moreover, any symmetric polynomial in $n$ variables evaluated at $\mathscr J_1,\dots,\mathscr J_n$ is central, hence it can be expanded (uniquely) as a linear combination of{\samepage
\begin{equation*}
\mathscr C_\lambda =\sum_{\sigma\in\mathfrak C(\lambda)}\sigma
\end{equation*}
for $\lambda\in\mathcal P $ with $|\lambda|=n$.}

It is well-known, cf.~\cite{HarnadPaquet} or, for a review,~\cite[Sections~2.1 and 2.2]{GisonniGravaRuzza2021}, that $H^{b=0}_G(\lambda;r)$ is the coefficient of $\epsilon^r\mathscr C_\lambda$ in
\begin{equation}
\label{eq:314}
\frac 1{\mathsf{z}_\lambda}\prod_{i=1}^nG(\epsilon \mathscr J_i) ,
\end{equation}
provided that this expression is expanded as a linear combination of the elements $\mathscr C_\lambda$.

\begin{Definition}[colored monotone factorizations]
Let $\lambda\in\mathcal P $ with $|\lambda|=n$.
A \textit{monotone factorization of length $r$ and type $\lambda$} is an ordered $r$-tuple of transpositions $(\pi_1,\dots,\pi_r)$ in~$\mathfrak S_n$ such that $\pi_1\cdots\pi_r\in\mathfrak C(\lambda)$ and that, writing $\pi_i=(a_i,b_i)$ with $a_i<b_i$, we have $b_i\leq b_{i+1}$ for $i=1,\dots,r-1$.

Let $L$, $M$ be nonnegative integers.
An \textit{$(L|M)$-coloring} of a monotone factorization of length~$r$ and type $\lambda$ is a mapping $c\colon \lbrace 1,\dots,r\rbrace\to\lbrace 1,\dots,L+M\rbrace$ such that
\begin{itemize}\itemsep=0pt
\item if $1\leq i<j\leq r$ and $1\leq c(i)=c(j)\leq L$, then $b_i<b_j$;
\item if $1\leq i<j\leq r$ and $b_i=b_j$, then $c(i)\leq c(j)$.
\end{itemize}
An \textit{$(L|M)$-colored monotone factorization of length $r$ and type $\lambda$} is a monotone factorization of length $r$ and type $\lambda$ with an $(L|M)$-coloring.
\end{Definition}

\begin{Proposition}
\label{prop:or}
Let $G(z)=\frac{\prod_{i=1}^L(1+u_iz)}{\prod_{i=1}^M(1-u_{L+i}z)}$, for a set of parameters $u_1,\dots,u_{L+M}$.
For all $\lambda\in\mathcal P $ and $r\geq 0$, we have
\begin{equation*}
H_G^{b=0}(\lambda;r) = \frac {1}{|\lambda|!}\sum_{(\Pi,c)}u_{c(1)}\cdots u_{c(r)},
\end{equation*}
where the sum on $(\Pi,c)$ runs over the set of $(L|M)$-colored monotone factorizations of length~$r$ and type~$\lambda$.
\end{Proposition}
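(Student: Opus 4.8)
The plan is to evaluate the central element \eqref{eq:314} by expanding the product $\prod_{i=1}^n G(\epsilon\mathscr J_i)$ and reading off the coefficient of $\mathscr C_\lambda$, matching each surviving term against a colored monotone factorization. First I would insert the rational weight generating function and expand each factor $G(\epsilon\mathscr J_i)$ as a power series in $\epsilon$; since $G(z)=\prod_{i=1}^L(1+u_iz)/\prod_{i=1}^M(1-u_{L+i}z)$, each factor contributes a sum over monomials in the $u$'s, with the factors $(1+u_i\epsilon\mathscr J_k)$ of color $\le L$ contributing $0$ or $1$ power of $\mathscr J_k$, and the geometric-series factors $1/(1-u_{L+j}\epsilon\mathscr J_k)$ of color $>L$ contributing arbitrarily many. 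Because the Young--Jucys--Murphy elements satisfy $\mathscr J_k=\sum_{a<k}(a,k)$, multiplying out these products yields a sum of ordered tuples of transpositions $(a_i,b_i)$ with $a_i<b_i$.

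The key structural point is that the grouping of the product by vertex index $k$ forces a monotonicity constraint: a transposition coming from the factor $G(\epsilon\mathscr J_k)$ has its larger entry equal to $k$, so reading the factorization in the order $k=1,2,\dots,n$ (and, within a fixed $k$, in the order dictated by the expansion) produces precisely the weakly increasing sequence $b_1\le b_2\le\dots\le b_r$ required of a monotone factorization. The coloring $c$ records, for each chosen transposition, which of the $L+M$ factors of $G$ it came from; the two coloring constraints in the definition then translate exactly into the combinatorics of the expansion. Namely, the constraint that a factor $(1+u_i\epsilon\mathscr J_k)$ of color $\le L$ contributes at most one transposition at each fixed $k$ gives the condition ``$c(i)=c(j)\le L \Rightarrow b_i<b_j$'', while the requirement that within a fixed $k$ the $L$ finite factors precede the $M$ geometric factors (and are internally ordered by color) gives the condition ``$b_i=b_j \Rightarrow c(i)\le c(j)$''. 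The monomial weight $u_{c(1)}\cdots u_{c(r)}$ is read off directly, and the normalization $1/\mathsf{z}_\lambda$ in \eqref{eq:314} combined with the extraction of the coefficient of $\mathscr C_\lambda$ produces the prefactor $1/|\lambda|!$ after accounting for $\mathsf{z}_\lambda=|\mathfrak C(\lambda)|^{-1}n!$ with $n=|\lambda|$.

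The final step is to confirm that each resulting tuple $(\pi_1,\dots,\pi_r)$ with $\pi_1\cdots\pi_r\in\mathfrak C(\lambda)$ is counted with the correct multiplicity: each colored monotone factorization $(\Pi,c)$ of type $\lambda$ arises from exactly one term in the expansion of $\prod_{i=1}^n G(\epsilon\mathscr J_i)$, so the bijection between surviving terms and colored monotone factorizations is a genuine bijection, not merely a surjection. The main obstacle I expect is bookkeeping the ordering convention \emph{within} a fixed value of $b_i=k$: one must check that the prescribed ordering of the $L+M$ factors of $G(\epsilon\mathscr J_k)$ (finite factors before infinite factors, each block ordered by color) is consistent with both coloring axioms simultaneously and does not overcount. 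Once this ordering is pinned down, the identification is purely formal, since \eqref{eq:314} is by definition the group-algebra expression whose $\mathscr C_\lambda$-coefficient equals $H_G^{b=0}(\lambda;r)$, and the expansion reorganizes this coefficient as the claimed weighted count.
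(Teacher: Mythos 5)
Your proposal is correct and follows essentially the same route as the paper: expand $\prod_{i=1}^n G(\epsilon\mathscr J_i)$ factor by factor (the paper isolates this expansion as a separate lemma in commuting variables $x_i$ before substituting the Young--Jucys--Murphy elements), identify the resulting transposition tuples with colored monotone factorizations via the lexicographic ordering of (vertex, color) pairs, and extract the $\mathscr C_\lambda$-coefficient using $|\mathfrak C(\lambda)|=|\lambda|!/\mathsf{z}_\lambda$. The ordering subtlety you flag within a fixed $b_i=k$ is resolved in the paper exactly as you describe, by taking the lexicographic order on $\lbrace 1,\dots,n\rbrace\times\lbrace 1,\dots,L+M\rbrace$ and retaining only terms of degree at most one in $x_bu_c$ for $c\leq L$.
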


\begin{Remark}
This combinatorial model is different from the one given, e.g., in~\cite[Section~2.5]{GisonniGravaRuzza2021}.
\end{Remark}

In the interest of clarity, let us first prove a simple lemma.

\begin{Lemma}
We have
\begin{equation*}
\prod_{i=1}^n\frac{(1+\epsilon u_1 x_i)\cdots(1+\epsilon u_L x_i)}{(1-\epsilon u_{L+1}x_i)\cdots(1-\epsilon u_{L+M}x_i)} = \sum_{r\geq 0}\epsilon^r\sum_{1\leq b_1\leq\cdots\leq b_r\leq n}x_{b_1}\cdots x_{b_r} \sum_{c}u_{c(1)}\cdots u_{c(r)},
\end{equation*}
where the inner sum runs over mappings $c\colon \lbrace 1,\dots,r\rbrace\to\lbrace 1,\dots,L+M\rbrace$ satisfying the following constraints:
\begin{itemize}\itemsep=0pt
\item if $1\leq i<j\leq r$ and $1\leq c(i)=c(j)\leq L$, then $b_i<b_j$;
\item if $1\leq i<j\leq r$ and $b_i=b_j$, then $c(i)\leq c(j)$.
\end{itemize}
\end{Lemma}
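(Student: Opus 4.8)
The plan is to expand both sides as formal power series in $\epsilon$ with polynomial coefficients in $x_1,\dots,x_n,u_1,\dots,u_{L+M}$, and to match them term by term through an explicit weight-preserving bijection. The starting observation is that the left-hand side is a product over $i\in\{1,\dots,n\}$ of the single-variable factor
\begin{equation*}
\prod_{a=1}^L(1+\epsilon u_a x_i)\prod_{a=1}^M\frac{1}{1-\epsilon u_{L+a}x_i},
\end{equation*}
and each such factor has a transparent expansion: the numerator factor $(1+\epsilon u_a x_i)$ places the positive color $a\leq L$ at the index $i$ with multiplicity $0$ or $1$, while the geometric series $\tfrac{1}{1-\epsilon u_{L+a}x_i}=\sum_{k\geq 0}(\epsilon u_{L+a}x_i)^k$ places the negative color $L+a$ at $i$ with any multiplicity. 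Expanding the product over $i$, a monomial is thus indexed by the data of, for each $i$, a subset $A_i\subseteq\{1,\dots,L\}$ of positive colors (each at most once) together with a multiplicity $\mu_i(a)\geq 0$ for each negative color $a\in\{L+1,\dots,L+M\}$; its contribution is $\epsilon^r$ times $\prod_i x_i^{d_i}$ times the matching product of the $u_a$, where $d_i=|A_i|+\sum_a\mu_i(a)$ and $r=\sum_i d_i$.

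First I would set up the bijection between this per-index data and the pairs $(b_1\leq\cdots\leq b_r,\,c)$ on the right-hand side. Given the per-index data, list the chosen colors index by index in increasing order of $i$, and within a fixed index in weakly increasing order of color; reading off the indices yields the weakly increasing sequence $b_1\leq\cdots\leq b_r$, and reading off the colors yields $c$. One then checks the two constraints. The condition that $b_i=b_j$ with $i<j$ forces $c(i)\leq c(j)$ holds because the colors at a fixed index are listed in weakly increasing order. The condition that equal positive colors $c(i)=c(j)\leq L$ force $b_i<b_j$ holds because each positive color occurs at most once per index, so two edges of the same positive color lie at distinct indices and hence, by the weak monotonicity of $b$, at strictly increasing ones. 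Conversely, from any $(b,c)$ obeying the two constraints one recovers the per-index data by grouping indices according to their common $b$-value and reading off the positive colors (each at most once, by the first constraint) and the negative multiplicities; the within-index ordering convention makes this reconstruction the unique one.

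Since corresponding monomials carry the same power $\epsilon^r$, the same $x$-monomial (each color placed at index $i$ contributes one factor $x_i$, so $\prod_i x_i^{d_i}=x_{b_1}\cdots x_{b_r}$), and the same $u$-monomial ($\prod_i u_{c(i)}=u_{c(1)}\cdots u_{c(r)}$), the two expansions agree term by term, proving the identity. The only point requiring genuine care is the bijection above: one must check that the ordering conventions (indices increasing, then colors increasing) make the passage to $(b,c)$ both well defined and invertible, and in particular that the two stated constraints are \emph{exactly} the translations of ``each positive color appears at most once per index'' and ``the within-index listing is by increasing color''; everything else is routine bookkeeping of finite and geometric products. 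An essentially equivalent route would be induction on $n$, peeling off the factor at index $n$ and matching it with the edges of maximal value $b=n$, but I would prefer the direct bijection since it makes the combinatorial content manifest.
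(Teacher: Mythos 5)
Your proof is correct and is essentially the paper's own argument in different clothing: the paper expands all factors via the geometric-series identity for a totally ordered set, applies it to $\{1,\dots,n\}\times\{1,\dots,L+M\}$ with lexicographic order, and then truncates the positive-color factors to degree at most one, which is exactly your bijection organized by listing chosen colors in increasing index and then weakly increasing color. The two constraints emerge identically in both treatments, so nothing further is needed.
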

\begin{proof}
Let us recall that for any finite totally ordered set $(S,\preceq)$, we have
\begin{equation*}
\prod_{s\in S}\frac 1{1-\epsilon y_s} = \sum_{r\geq 0}\epsilon^r\sum_{\substack{s_1,\dots,s_r\in S,\\s_1\preceq\cdots\preceq s_r}} y_{s_1}\cdots y_{s_r} .
\end{equation*}
Applying this identity to the set $S=\lbrace 1,\dots,n\rbrace\times\lbrace 1,\dots, L+M\rbrace$ with lexicographic order, we get
\begin{equation*}
\prod_{i=1}^n\prod_{j=1}^{L+M}\frac{1}{(1-\epsilon x_i u_j)} = \sum_{r\geq 0}\epsilon^r\sum_{1\leq b_1\leq \cdots\leq b_r\leq n} x_{b_1}\cdots x_{b_r}\sum_{\substack{c_1,\dots,c_r=1,\dots,L+M \\ c_i\leq c_j\mbox{ whenever }b_i=b_j}}u_{c_1}\cdots u_{c_r} .
\end{equation*}
Finally, we retain only terms of degree at most one in~$x_bu_c$ for $b=1,\dots, n$ and $c=1,\dots ,L$.
\end{proof}

\begin{proof}[Proof of Proposition~\ref{prop:or}]
The coefficient of $\epsilon^r$ in~\eqref{eq:314} is
\begin{align*}
&\frac 1{\mathsf{z}_\lambda}\sum_{1\leq b_1\leq \dots\leq b_r\leq n}\mathscr J_{b_1}\cdots\mathscr J_{b_r} \sum_{c}u_{c(1)}\cdots u_{c(r)}
\\
&\qquad = \frac 1{\mathsf{z}_\lambda}\sum_{1\leq b_1\leq \dots\leq b_r\leq n}\sum_{a_1=1}^{b_1-1}\cdots\sum_{a_r=1}^{b_r-1}(a_1,b_1)\cdots (a_r,b_r) \sum_{c}u_{c(1)}\cdots u_{c(r)},
\end{align*}
where the inner sum over $c$ is as in the statement of the lemma above.
In the equality, we used the definition~\eqref{eq:J} of Young--Jucys--Murphy elements.
Finally we extract the coefficient of $\mathscr C_\lambda$ in this expression by recalling that $|\mathfrak C(\lambda)|=|\lambda|!/\mathsf{z}_\lambda$.
\end{proof}

It can be shown (cf.~\cite[Remark~3]{BonzomChapuyDolega}) that an \emph{orientable} monotone Hurwitz map with profile~$\lambda$ is uniquely characterized by a monotone factorization of type~$\lambda$.
Since the notions of coloring are exactly parallel in the two settings (Hurwitz maps or factorizations), we obtain a different proof of Theorem~\ref{thm:geo} in the orientable case.

Moreover, we have recalled the group theoretical interpretation of Hurwitz numbers in the $b=0$ case, such that Theorems~\ref{thm} and~\ref{thm:Laguerre} reduce, when $\beta=2$, to the known combinatorial interpretation of large-$n$ expansions of (positive and negative) correlators of the Jacobi and Laguerre unitary ensembles in terms of monotone factorizations in the symmetric group~\cite{CundenDahlqvistOConnell,GisonniGravaRuzza2021} (although in the equivalent formulation in terms of colored monotone factorizations).

\appendix

\section{Proof of~(\ref{eq:again})}\label{appendix}

We wish to show that for any $\lambda\in\mathcal P $ with $\ell(\lambda)\leq n$, we have
\begin{equation}
\label{eq:againA}
\prod_{i=1}^{n-1}\prod_{j=i+1}^n\frac{\bigl(\tfrac\beta 2(j-i+1)\bigr)_{\lambda_i-\lambda_j}}{\bigl(\tfrac\beta 2(j-i)\bigr)_{\lambda_i-\lambda_j}} = \prod_{\square\in\mathsf{D}(\lambda)}\frac{n+\mathsf{c}_{2/\beta}(\square)}{\tfrac 2\beta\mathsf{arm}_\lambda(\square)+\mathsf{leg}_\lambda(\square)+1} .
\end{equation}
We first consider the part of the product in the left-hand side of~\eqref{eq:againA} with $i=1$, which is
\begin{equation*}
\prod_{j=2}^n\frac{\bigl(\tfrac\beta 2j\bigr)_{\lambda_1-\lambda_j}}{\bigl(\tfrac\beta 2(j-1)\bigr)_{\lambda_1-\lambda_j}} =
\frac{\bigl(\tfrac\beta 2n\bigr)_{\lambda_1-\lambda_n}}{\bigl(\tfrac\beta 2\bigr)_{\lambda_1-\lambda_2}} \prod_{j=2}^{n-1}\frac{\bigl(\tfrac\beta 2j\bigr)_{\lambda_1-\lambda_j}}{\bigl(\tfrac\beta 2j\bigr)_{\lambda_1-\lambda_{j+1}}} =
\frac{\bigl(\tfrac\beta 2n\bigr)_{\lambda_1-\lambda_n}}{\prod_{j=1}^{n-1}\bigl(\tfrac\beta 2j+\lambda_1-\lambda_j\bigr)_{\lambda_{j}-\lambda_{j+1}}} .
\end{equation*}
Let $\mathsf{D}_1(\lambda)=\lbrace(1,1),(1,2),\dots,(1,\lambda_1)\rbrace$ be the first {\it row} of the diagram of $\lambda$.
The numerator of the last expression is equal to
\begin{equation}
\label{num}
\frac{\prod_{\square\in\mathsf{D}_1(\lambda)}\bigl(\tfrac\beta 2n+\tfrac\beta 2\mathsf{c}_{2/\beta}(\square)\bigr)}
{\prod_{v=\lambda_1-\lambda_n+1}^{\lambda_1}\bigl(\tfrac\beta 2n+v-1\bigr)} ,
\end{equation}
\big(note that $\mathsf{c}_{2/\beta}(\square)=\tfrac 2\beta(j-1)$ for $\square=(1,j)\in\mathsf{D}_1(\lambda)$\big).
On the other hand, we expand the denominator
\begin{align*}
\prod_{j=1}^{n-1}\bigl(\tfrac\beta 2j+\lambda_1-\lambda_j\bigr)_{\lambda_{j}-\lambda_{j+1}} ={}&
\Biggl(\prod_{k=0}^{\lambda_1-\lambda_2-1}\bigl(\tfrac\beta 2+k\bigr)\Biggr)
\Biggl(\prod_{k=\lambda_1-\lambda_2}^{\lambda_1-\lambda_3-1}\bigl(\tfrac\beta 22+k\bigr)\Biggr)\cdots
\\
&{}\times \Biggl(\prod_{k=\lambda_{1}-\lambda_{n-1}}^{\lambda_1-\lambda_n-1}\bigl(\tfrac\beta 2(n-1)+k\bigr)\Biggr)
\end{align*}
and we quickly realize that the factors in the first product (on the right-hand side) correspond to boxes $\square\in\mathsf{D}_1(\lambda)$ with $\mathsf{leg}_\lambda(\square)=0$, those in the second one to boxes $\square\in\mathsf{D}_1(\lambda)$ with $\mathsf{leg}_\lambda(\square)=1$, and so on until the last group of factors, corresponding to boxes $\square\in\mathsf{D}_1(\lambda)$ with $\mathsf{leg}_\lambda(\square)=n-2$, such that this product equals
\begin{gather}
\frac{\prod_{\square\in\mathsf{D}_1(\lambda)}\bigl(\tfrac \beta 2 (\mathsf{leg}_\lambda(\square)+1)+\mathsf{arm}_\lambda(\square) \bigr)}{\prod_{\square\in\mathsf{D}_1(\lambda),\,\mathsf{leg}_\lambda(\square)=n-1}\bigl(\tfrac \beta 2 (\mathsf{leg}_\lambda(\square)+1)+\mathsf{arm}_\lambda(\square) \bigr)} \nonumber\\
\qquad= \frac{\prod_{\square\in\mathsf{D}_1(\lambda)}\bigl(\tfrac \beta 2 (\mathsf{leg}_\lambda(\square)+1)+\mathsf{arm}_\lambda(\square) \bigr) }{\prod_{v=\lambda_1-\lambda_n+1}^{\lambda_1}\bigl(\tfrac\beta 2n+v-1\bigr)} .\label{den}
\end{gather}
Taking the ratio of~\eqref{num} by~\eqref{den}, we get
\begin{equation*}
\prod_{j=2}^n\frac{\bigl(\tfrac\beta 2j\bigr)_{\lambda_1-\lambda_j}}{\bigl(\tfrac\beta 2(j-1)\bigr)_{\lambda_1-\lambda_j}} = \prod_{\square\in\mathsf{D}_1(\lambda)}\frac{n+\mathsf{c}_{2/\beta}(\square)}{\tfrac 2\beta\mathsf{arm}_\lambda(\square)+\mathsf{leg}_\lambda(\square)+1} ,
\end{equation*}
which implies that the validity of~\eqref{eq:againA} for the partition $(\lambda_1,\lambda_2,\dots)$ follows from the validity of the same formula for the partition $(\lambda_2,\lambda_3,\dots)$.
Therefore, the proof proceeds by induction on the length of the partition, the base case of the empty partition being trivial.\hfill$\blacksquare$

\subsection*{Acknowledgements}

I am grateful to Dan~Betea, Massimo~Gisonni, and Tamara~Grava for valuable conversations and to Valentin Bonzom, Guillaume Chapuy, and Maciej Do\l\c{e}ga for insightful and helpful correspondence.
I would also like to thank the anonymous referees for useful suggestions.
This work is supported by the FCT grant 2022.07810.CEECIND.

\pdfbookmark[1]{References}{ref}
\LastPageEnding

\end{document}